\setlist{parsep = 0.2\parsep, itemsep = 0.2\itemsep}
\theoremstyle{plain}
\newtheorem{theorem}{Theorem}
\crefname{theorem}{Theorem}{Theorems}
\newtheorem{lemma}[theorem]{Lemma}
\crefname{lemma}{Lemma}{Lemmas}
\crefname{corollary}{Corollary}{Corollaries}
\crefname{proposition}{Proposition}{Propositions}
\theoremstyle{definition}
\crefname{example}{Example}{Examples}
\crefname{mechanism}{Mechanism}{Mechanisms}
\crefname{procedure}{Procedure}{Procedures}
\crefname{line}{line}{lines}
\DeclareMathOperator*{\argmax}{argmax}
\patchcmd{\@sect}{\uppercase}{\MakeTextUppercase}{}{}
\patchcmd{\@sect}{\uppercase}{\MakeTextUppercase}{}{}
\newcommand*{\tikzmk}[1]{\tikz[remember picture,overlay,] \node (#1) {};\ignorespaces}
\newcommand{\boxit}[1]{\tikz[remember picture,overlay]{\node[yshift=3pt,fill=#1,opacity=.25,fit={($(A)+(.005\linewidth, .2\baselineskip)$)($(B)+(.96\linewidth,.8\baselineskip)$)}] {};}\ignorespaces}
\colorlet{highlight}{black!20}
\newcommand{\boldone}{\mathds{1}}
\newcommand{\opt}{\mathit{Opt}}
\newcommand{\sopt}{\mathit{SOpt}}
\newcommand{\avgopt}{\mathit{AvgOpt}}
\newcommand{\totlen}[1]{||#1||}
\newcommand{\paths}{\mathit{paths}}
\newcommand{\Paths}{\mathit{Paths}}
\newcommand{\upaths}{\bigcup_{i}\Paths_i}
\newcommand{\Explored}{\mathit{Exp}}
\newcommand{\nxt}{\nu}
\newcommand{\opaths}{\mathit{opaths}}
\newcommand{\Benchmark}{\mathit{BM}}
\newcommand{\benchmark}{\mathit{bm}}
\newcommand{\irpath}{\pi_{\mathit{IR}}}
\newenvironment{customthm}[1]
  {\count@\c@theorem
   \global\c@theorem#1 %
    \global\advance\c@theorem\m@ne
   \theorem}
  {\endtheorem
   \global\c@theorem\count@}
\title{Incentive-Compatible Kidney Exchange in a Slightly Semi-Random Model}
\author{Avrim Blum}
\affiliation{\institution{Toyota Technological Institute at Chicago}\country{USA}}
\author{Paul G\"olz}
\affiliation{\institution{Carnegie Mellon University}\country{USA}}
\begin{abstract}
Motivated by kidney exchange, we study the following mechanism-design problem:
On a directed graph (of transplant compatibilities among patient--donor pairs), the mechanism must select a simple path (a chain of transplantations) starting at a distinguished vertex (an altruistic donor) such that the total length of this path is as large as possible (a maximum number of patients receive a kidney).
However, the mechanism does not have direct access to the graph.
Instead, the vertices are partitioned over multiple players (hospitals), and each player reports a subset of her vertices to the mechanism.
In particular, a player may strategically omit vertices to increase how many of her vertices lie on the path returned by the mechanism.

Our objective is to find mechanisms that limit incentives for such manipulation while producing long paths.
Unfortunately, in worst-case instances, competing with the overall longest path is impossible while incentivizing (approximate) truthfulness, i.e., requiring that hiding nodes cannot increase a player's utility by more than a factor of $1 + o(1)$. %
We therefore adopt a semi-random model where a small ($o(n)$) number of random edges are added to worst-case instances.
While it remains impossible for truthful mechanisms to compete with the overall longest path, we give a truthful mechanism that competes with a weaker but non-trivial benchmark: the length of any path whose subpaths within each player have a minimum average length.
In fact, our mechanism satisfies even a stronger notion of truthfulness, which we call \emph{matching-time incentive compatibility}.
This notion of truthfulness requires that each player not only reports her nodes truthfully but also does not stop the returned path at any of her nodes in order to divert it to a continuation inside her own subgraph.
\end{abstract}
\keywords{kidney exchange; mechanism design; semi-random models; beyond worst-case analysis}
\begin{document}

\begin{titlepage}
\maketitle
\end{titlepage}

\section{Introduction}
\label{sec:intro}
For many years, the patients in need of a kidney transplant have far outnumbered the organs available from deceased donors~\cite{MHC18}.
While many patients can find a friend or relative willing to donate a live organ to them,
medical incompatibilities frequently prevent such a direct donation.
Kidney-exchange platforms address this problem by matching patients to kidneys across multiple patient--donor pairs.
The overarching goal in designing these platforms is to maximize the number of patients who receive transplants, i.e., the \emph{welfare} of the matching.
The input to the mechanism can be represented as a \emph{compatibility graph}, a directed graph whose nodes represent patient--donor pairs, and where an edge indicates that the first pair's donor can donate to the second pair's patient.

Traditionally, research on kidney exchange has focused on matchings\footnote{As is customary in the kidney-exchange literature, we use the word ``matching'' differently from most of computer science. A matching is a set of edges in a directed graph such that each node has indegree and outdegree at most one, and for each node (except for the altruist, which we define later), the outdegree is at most the indegree.} consisting of disjoint 2-cycles and 3-cycles.
Cycles are a natural choice because a paired donor can only be expected to donate a kidney if her patient receives one; the short length of cycles allows surgeries to proceed simultaneously and thus prevents donors from reneging.
In this body of work, compatibility graphs are often modeled by classes of densely connected random graphs.
If these random graphs are sufficiently large, 2- and 3-cycles can match essentially all pairs that one could hope to match~\cite{AR11}.
However, these properties of the random model do not reflect the observation that, in real kidney-exchange platforms, more complex matchings can match substantially more patient--donor pairs~\cite{AGR+12}.

Indeed, subsequent work~\cite{AGR+12,DPS12} has highlighted that welfare can be increased substantially by one such form of matchings in particular: long chains starting at non-directed altruistic donors.
These donors, which we simply refer to as \emph{altruists} in the rest of the paper, are donors willing to donate a kidney without requiring a transplant in exchange.
When an altruist donates a kidney, a compatible patient can receive a kidney \emph{before} her donor donates to another patient.
Thus, even if one of the donors should renege, no pair will have donated without having received a kidney in return, which means that simultaneous transplantations are no longer required.
Since donors still renege very infrequently~\cite{CGN+17}, an altruistic donor can in principle be at the beginning of an arbitrary-length chain of donations.\footnote{As of February 2021, \href{https://web.archive.org/web/20210209092241/https://www.uab.edu/kidneychain/}{one particular chain} has led to 114 successful transplants and is still ongoing.}

In this paper, we study the problem of finding such a long path starting from a given altruist.
For this path to obtain high welfare, the path must usually extend across patient--donor pairs in multiple hospitals.
However, finding such a path across hospitals is complicated by the fact that the centralized path-finding mechanism cannot directly observe the patient--donor pairs at each hospital and must rely on hospitals to accurately report them.
Since hospitals feel primarily bound to their own patients' well-being, a hospital might choose not to register certain pairs if this will lead the platform to match more of its other pairs~\cite{AR12}.
Thus, the problem of finding long paths in kidney exchange is one of mechanism design, which means that we search for mechanisms that incentivize hospitals to truthfully report all their clients, which is known as \emph{incentive compatibility}.\footnote{For ease of exposition, we assume that hospitals can hide nodes, not individual edges. However, as discussed in \cref{sec:assumptions}, our results generalize to the case where edges can also be hidden.}

In fact, we propose the stronger notion of \emph{matching-time incentive compatibility}, in which hospitals have additional opportunities for manipulation after the mechanism has determined a path.
At any node on the path belonging to a hospital, the hospital may receive the kidney and then use it to divert the original path to a continuation within its own subgraph.
Matching-time incentive compatibility requires that hiding vertices, diverting the path, or a combination of both does not pay off.
We believe this stronger notion of incentive compatibility, which generalizes the widely-used incentive property of individual rationality, to be of independent interest.
Throughout this paper, we consider $\big(1 + o(1)\big)$-approximate versions of these incentive properties.
That is, if manipulation can increase a hospital's utility by at most a factor of $\big(1+ o(1)\big)$, we assume that hospitals can be convinced to participate truthfully, which is in the common interest.

On worst-case graphs, simple counterexamples show that welfare maximization and incentive compatibility are incompatible (\cref{sec:irworstcase}).
This impossibility holds even for generous approximate notions of welfare maximization and incentive compatibility.
On the other extreme, if the compatibility graphs were purely random Erd\H{o}s-R\'enyi graphs, the problem would become easy for most parameter settings: either there are no long paths or there is a Hamiltonian cycle.
Unfortunately, in practice, kidney-exchange compatibility graphs do not resemble Erd\H{o}s-R\'enyi graphs due to issues such as sensitization and the dynamic nature of how patients enter and exit the system.

Adopting a semi-random approach, we show that one can obtain interesting positive results by adding a small number of random edges to worst-case graphs.
Notably, edges may be added with probability $1/n^{c}$ for $1 < c < 2$, meaning that only a vanishing fraction of vertices will be the endpoint of a random edge.
To our knowledge, these are the first positive results for semi-random graphs with edge probabilities in $o(1/n)$.

In contrast to purely random models, our semi-random model captures a wide range of compatibility graphs, including very sparse and heterogeneous ones.
While the random edges do not directly model a clinical phenomenon,\footnote{One medical justification for these random edges could be that some exchanges perform blood-type incompatible transplantations when tissue-type profiles are well compatible~\cite{FHC+13}. Since we insert random edges only for a $o(1)$ fraction of agents, such phenomena need not be frequent to support our model.} we interpret them as a mild regularity condition on the graph, expressing that a given path in one hospital's subgraph and a given path in another hospital's subgraph are likely to be linked by at least one edge.
Since these random edges are only assumed to exist between pairs of different hospitals, the hospital subgraphs themselves can be purely worst case, somewhat like in the semi-random model of Makarychev, Makarychev, and Vijayaraghavan~\cite{MMV12}.

\subsection{Our Techniques and Results}
\label{sec:techniques}
Even in our semi-random model, maximum welfare is at odds with incentive compatibility.
This is shown by the family of instances in \cref{sec:irnoopt}, in which it is impossible to compete with the optimal path without creating incentives for misrepresentation.
As we discuss in \cref{sec:lowerdiscussion}, these instances are problematic because any high-welfare path must visit one hospital's subgraph many times but spend few steps there each time, a scenario which incentivizes hospitals to hide outgoing edges to other hospitals.
It is natural, then, that we can overcome this impossibility result by benchmarking our mechanism against the longest path such that, for each hospital, the subpaths the hospital's subgraph have some minimum average length.

\begin{figure}
    \centering
    \includegraphics[width=.65\textwidth]{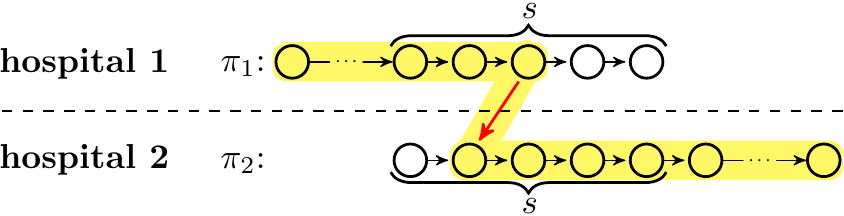}
    \caption{Illustration for stitching together two paths $\pi_1$ and $\pi_2$. If the red edge exists, we can stitch the paths together to obtain the yellow path.}
    \label{fig:stitch}
\end{figure}

The random edges in our model are so sparse that they, by themselves, do not form paths of significant length.
The main insight, however, is that they allow us to ``stitch together'' sufficiently long paths of worst-case edges inside different hospitals.
Indeed, suppose that $p$ is the probability of a random edge existing, and that $s$ is a parameter such that $s^2 \, p \gg 1$.
As illustrated in \cref{fig:stitch}, let there be two paths $\pi_1, \pi_2$ of length greater than $s$, each internal to a different hospital.
Then, the expected number of edges from the the $s$ last nodes of $\pi_1$ to the first $s$ nodes of $\pi_2$ is high.
If any of these edges do exist, we can use them to ``stitch'' $\pi_1$ and $\pi_2$ together.
Then, nearly all nodes of $\pi_1$ and $\pi_2$ will lie on the stitched path except at most $2 \, s$ nodes in the suffix of $\pi_1$ and the prefix of $\pi_2$.

Roughly speaking, our mechanism has the following structure:
From the graph reported by the hospitals, the mechanism chooses a set of disjoint paths within each hospital so that total length is large, and so that truthful reporting maximizes the total length of a hospital's chosen paths.
Then, the mechanism stitches these internal paths together, which will succeed with high probability.
Successfully stitched paths preserve the welfare and hospital utility except for small stitching losses, which gives approximate guarantees on welfare and incentive compatibility.

Since our mechanism can stitch long enough paths with high probability, it can select the internal paths of hospital~$i$ (mostly) based on the subgraph reported by hospital~$i$. As a result, each hospital wants to reveal all its nodes to increase the total length of its chosen paths.
If, by contrast, the mechanism chose its path by how well nodes are connected via cross edges, hospitals might hide connections to other hospitals to sway the mechanism to match more of their own nodes instead. 

Challenges in translating this high-level structure into concrete mechanisms include:
\begin{description}[style=unboxed,leftmargin=0.62cm]
    \item[The choice of original paths.] The paths within each hospital initially selected by the mechanism must compete with the benchmark in terms of welfare, and must lend themselves to stitching.
    \item[The number of paths per hospital.] Stitching only works if we alternate the hospitals on a chain, so we cannot stitch too many paths by a single hospital.
    \item[Internal paths too short for stitching.] The altruist might only be able to reach long paths via an initial sequence of short internal paths connected by worst-case cross edges. In this case, we do have to consider worst-case cross edges in deciding where to stitch while also ensuring that incentives are limited. 
\end{description}%
Our main result is the following:
\begin{customthm}{8}
    Let there be at least two hospitals.
    Let $s, p$ be parameters varying in $n$ such that $p \, s^2 \in \omega\big(\log (n)^5\big)$.
    Let $\avgopt^s$ be the longest path from the altruist such that, for each hospital, the average length of the subpaths in this hospital's subgraph is at least $s$ (or the hospital does not own any nodes on the path).
    
    Then, there is a mechanism that, with high probability, produces a path of length at least $\big(1 - o(1)\big) \, |\avgopt^s|$ assuming truthful responses, and that ensures that each hospital can only increase its utility by lower-order terms through manipulation, including matching-time manipulation.
\end{customthm}
The rest of this paper is organized as follows:
In \cref{sec:relatedwork}, we discuss related work on kidney exchange and semi-random models.
In \cref{sec:lower}, we present lower bounds on the welfare guarantees achievable by incentive-compatible mechanisms, both in the worst case and in our semi-random model.
We also explain why our welfare benchmark restricting the average length of subgraphs for each hospital is a natural way to circumvent these impossibilities.
In \cref{sec:s}, we present a simple mechanism that gives lower guarantees in terms of welfare and incentive properties, and also displays key ideas of our full mechanism, which we present in \cref{sec:avg}.
\Cref{sec:assumptions} discusses alternative modeling assumptions and how they would affect the problem. We conclude in \cref{sec:discussion}.

\subsection{Related Work}
\label{sec:relatedwork}
There is a rich body of research on algorithm and mechanism design for kidney exchange~\cite{BMR09,AR11,TP15,AFK+15,HDH+15,BCH+17}, much of which we have already mentioned.
Even though an influential paper by Ashlagi and Roth~\cite{AR12} specifically identified hospital incentives and altruist-initiated chains as the two main challenges in increasing the welfare of kidney exchange in practice, we know of no existing positive results accounting for both aspects, which is perhaps understandable in light of the robustness of worst-case results as those in \cref{sec:irworstcase}.\footnote{While one paper~\cite{HDH+15} claims to overcome these impossibilities in a repeated setting, there are concerns about the validity of their results, which we have raised with the authors.}
If matchings are restricted to unions of 2-cycles, Ashlagi, Fischer, Kash, and Procaccia~\cite{AFK+15} give an incentive-compatible mechanism with good welfare guarantees, but their approach does not extend to richer matchings.

Whenever a problem is intractable for worst-case inputs and specific input distributions are hard to justify, semi-random models can allow for positive results while retaining substantial generality.
In most works that apply semi-random models, the motivating intractability is computational complexity.
For instance, Blum and Spencer~\cite{BS95} pioneered semi-random models in the context of $3$-coloring, which is NP hard in the worst case.
One of their models starts from an adversarially chosen $3$-colored graph and then, independently for each pair of differently colored nodes, toggles the existence of an edge between the two nodes with some small probability $p$.
If $p \geq n^{-0.6 + \epsilon}$ (and all colors have $\Omega(n)$ nodes), an efficient algorithm 3-colors the resulting graph with high probability.
Since then, many authors have investigated other computational problems in related models~\cite{FK00,KMM11,FK98}; we refer the reader to Feige~\cite{Feige20} for a survey of the literature.
Semi-random models are also used in smoothed analysis~\cite{ST04,ERV14}, again driven by complexity considerations.
As in our work, one can understand positive results as saying that worst-case instances have to be constructed in a very brittle way, which hopefully makes them unlikely to appear in practice.

Only in few cases have semi-random models been applied for desiderata other than complexity.
The most prominent example is the use of random-order models for online algorithms~\cite{GS20,KMT11,Kenyon96}, where a random arrival order makes the problem more tractable in an information-theoretic sense.
In certain versions of the secretary problem, for instance, an adversary chooses a set of quality scores, but the order in which the algorithm sees these scores is chosen uniformly at random.
Here, assuming a random arrival order is not a tool for easier computation; instead, it increases the best approximation ratio of \emph{any} algorithm from $1/n$ to $1/e$.
Babaioff, Immorlica, and Kleinberg~\cite{BIK07} combine these ideas with mechanism design in order to design auctions that are incentive compatible when a set of buyers with worst-case valuations arrive in random order.
The fact that monetary transfers are available in this setting gives their mechanisms a flavor decidedly different from ours.

We are not aware of existing work on semi-random models that gives positive results with edge probabilities as low as ours.
As two examples of work giving relatively low edge probabilities, Makarychev et al.~\cite{MMV12} find planted balanced cuts for edge probabilities in $\mathit{polylog}(n)/n$ for balanced cuts; Chen, Sanghavi, and Xu~\cite{CSX12} recover clusters for a similar range of probabilities in a different semi-random model.
By contrast, our approach can handle edge probabilities as low as $1/n^{2-\epsilon}$.

Our work is most closely related to that of Blum et al.~\cite{BCH+17}, who design mechanisms for kidney exchange in a different semi-random model.
Their model is semi-random in that, in a worst-case compatibility graph, the nodes are allocated randomly to the hospitals.
For matchings composed of disjoint constant-length cycles, their mechanism guarantees optimal welfare and individual rationality up to lower-order terms, a much weaker notion than matching-time incentive compatibility.
Blum et al.\ acknowledge that, unfortunately, their approach ``does not seem to extend beyond individual rationality,'' and they provide an impossibility result showing that individual rationality cannot be obtained if long paths from altruistic donors are allowed.

An argument closely related to the stitching technique described in \cref{sec:techniques} was used by Dickerson and Sandholm~\cite{DS17} in the proof of their Proposition~2.
However, we make much heavier use of this technique: Whereas Dickerson and Sandholm perform a single stitching between two specified subgraphs, we construct a long path consisting of many stitchings across multiple subgraphs.

\section{Model}
\label{sec:model}
Let there be a finite, directed base graph $G_\mathit{base} = (V, E_\mathit{base})$, where $n \coloneqq |V|$.
Let the vertices $V$ be partitioned into subsets $V_i$ for each of finitely many hospitals $i$.
Let $E_p$ denote the random variable ranging over subsets of $\bigcup_{i \neq j} (V_i \times V_j)$, where each potential edge is included in $E_p$ independently with some given \emph{edge probability} $p$.
While our results hold for a wider range of $p$, we are primarily interested in very low values of $p$, which might, for example, scale as $1/n^c$ for $1 < c < 2$.
The \emph{compatibility graph} is defined by adding the random edges to the base graph, i.e., as $G \coloneqq (V, E_{\mathit{base}} \cup E_{p})$.
Let one node in $G$ be labeled as the altruist $\alpha$.
The hospital owning the altruist node is called the \emph{altruist owner} and is denoted by $a$.
We consider an \emph{instance} to be the base graph together with node ownership, the value of $p$, and the identity of the altruist.

Each hospital~$i$ reports a subset $V_i' \subseteq V_i$ to the mechanism.
If $V_i' = V_i$, we say that $i$ reports \emph{truthfully}.
The mechanism then learns about all edges in the subgraph of $G$ induced by $\bigcup V_i'$, which means that the mechanism knows this subgraph, node ownership and the identity of the altruist,\footnote{For well-definedness, we assume that the altruist is always reported ($\alpha \in V_a'$). This is without loss of generality: If the altruist is hidden, the mechanism cannot output anything and, thus, no hospital can receive more utility from the mechanism. The more promising strategy of using the altruist for internal matching rather than bringing it to the mechanism will be captured by the definition of matching-time incentive compatibility even without hiding nodes.} but does not have access to the remaining nodes and edges, to whether edges are random or deterministic, and to the value of $p$.
Based on this knowledge, the mechanism selects a (simple) path starting at the altruist.

Let the \emph{length} $|\pi|$ of a path $\pi$ be its number of nodes, and let the \emph{total length} $\totlen{P}$ of a set of vertex-disjoint paths $P$ be the sum of their lengths.
We say that paths are \emph{disjoint} when they are vertex disjoint.
We refer to the length of the path produced by the mechanism as the \emph{welfare}, and to the number of vertices in this path that belong to a certain hospital as the \emph{utility} of this hospital.

We say that edges between nodes of the same hospital are \emph{internal edges}; all other edges are \emph{cross edges}.
When cross edges are considered as part of a path, we call them \emph{hops}.
A path entirely consisting of internal edges of hospital~$i$ is called an \emph{internal} or \emph{$i$-internal} path, and we refer to the maximal internal subpaths of a path as the \emph{segments} of the path.

The welfare of a mechanism can be measured relative to multiple benchmarks, all of whom we define on the full compatibility graph (rather than just the base graph).
We define our strongest benchmark $\bm{\opt}$ as the longest path starting at the altruist.
We also define two weaker benchmarks.
First, for a given $s$, let the \emph{long-segments benchmark}, $\bm{\sopt^s}$, be the longest path from the altruist such that every segment has at least length $s$, i.e., such that no hop appears too close to another hop, or to the start or end of the path.
Second, for some $s$, let the \emph{high-averages benchmark}, $\bm{\avgopt^s}$, be the longest path from the altruist such that each hospital has either no segments in the path, or its mean segment length is at least $s$.
Either of the two weaker benchmarks might not be defined; in these cases, consider the benchmark as the empty path.
Note that, on every compatibility graph, the total length of $\avgopt^s$ and $\sopt^s$ is monotone nonincreasing in $s$, and that, for every compatibility graph and for every $s$, $\totlen{\sopt^s} \leq \totlen{\avgopt^s} \leq \totlen{\opt}$.

We search for a mechanism that, with high probability over the random edges of a given instance, satisifies the following desiderata:
\begin{description}[style=unboxed,leftmargin=0.62cm]
    \item[Efficiency:] If all hospitals report truthfully, the welfare of the mechanism is at least a $\big(1 - o(1)\big)$ fraction of the welfare of a given benchmark.
    \item[Incentive Compatibility:] If all hospitals report  truthfully except for hospital~$i$, $i$'s utility from the path produced by the mechanism is at most a $\big(1 + o(1)\big)$ fraction of its utility when all hospitals report truthfully.
    \item[Individual Rationality:] Under truthful reports, the utility of the altruist owner is always at least a $\big(1 - o(1)\big)$ fraction of the length of its longest internal path $\irpath$ starting at the altruist.
    \item[Matching-Time Incentive Compatibility:] As for regular incentive compatibility, let $\pi$ be the path produced by the mechanism when all hospitals except for possibly $i$ report truthfully.
    Not only must $i$'s utility from $\pi$ be at most a $\big( 1 + o(1)\big)$ fraction of its truthful utility, but the same must be true for any path $\pi'$ created by diverting $\pi$ at one of $i$'s nodes $v$ into an internal continuation.\footnote{Formally: For some node $v \in V_i$ on $\pi$, write $\pi = \pi_1 \, v \, \pi_2$ for the prefix and suffix, and let $\pi_\mathit{ext}$ be an internal path of hospital~$i$, starting at $v$, disjoint from $\pi_1$, and not necessarily contained in the nodes reported by $i$. Then, $\pi' = \pi_1 \pi_{\mathit{ext}}$.}
\end{description}
Note that matching-time incentive compatibility implies incentive compatibility.
Matching-time incentive compatibility also generalizes individual rationality since the altruist owner may obtain her longest internal path $\irpath$ starting at the altruist by reporting all her nodes and diverting the path at the altruist into $\irpath$.

In \cref{sec:assumptions}, we discuss the effects of alternative modeling choices on our problem.

\section{Lower Bounds}
\label{sec:lower}
In this section, we give lower bounds on the welfare of all mechanisms that are either approximately individually rational or approximately incentive compatible.
For brevity, we present the arguments for individual rationality here and defer the proofs for incentive compatibility to \cref{app:lower}.

Consider any mechanism in which, with high probability, the altruist owner can increase her utility by at most $o(n)$ by dropping out and matching internally.\footnote{Note that our mechanisms fall into this class of mechanisms because, as we will show, hospitals can gain at most an $o(1)$ fraction of the utility obtained under truthful reporting by manipulating, and because the utility obtained under truthful reporting is at most $n$.}
If compatibility graphs are worst case with no added random edges, we show in \cref{sec:irworstcase} that the hypothetical mechanism will fall short of even our suboptimal benchmarks (even for some $s \in \Theta(n)$) by a gap in $\Theta(n)$ with constant probability.
In \cref{sec:irnoopt}, we then show that even with randomness, the hypothetical mechanism still cannot compete with the welfare of $\opt$, again by a gap in $\Theta(n)$ with constant probability.

The second of these lower bounds illustrates why our positive results can compete with the suboptimal benchmarks $\sopt^s$ and $\avgopt^s$, but not the full $\opt$.
More subtly, these analyses also show that purely efficiency-maximizing mechanisms can create large incentives for manipulation, resulting in hospitals being unlikely to participate truthfully or at all.
We show in \cref{sec:lowerdiscussion} that, in instances such as those in \cref{thm:irnoopt}, our mechanism actually gives higher utility to each hospital than the pure welfare maximization after accounting for hospital behavior due to incentives.
Finally, we explain how the lower bound in \cref{thm:irnoopt} guided our choice of suboptimal benchmarks.

\subsection{Worst-case graphs}
\label{sec:irworstcase}
\begin{theorem}
    \label{thm:irworstcase}
    Let $p=0$, i.e., suppose that no random edges are added to the worst-case base graph.
    For each deterministic mechanism, there exists a family of instances such that, on every instance, the mechanism will obtain $\Theta(n)$ less welfare than $\sopt^{n/4}$ or violate individual rationality by $\Theta(n)$.
    (Note that, if the mechanism falls short of $\sopt^{n/4}$, it also falls short of the stronger benchmarks $\sopt^s$ and $\avgopt^s$ for any $s \leq n/4$.)
\end{theorem}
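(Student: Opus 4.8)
Here is the plan. I would exhibit, for each large $n$, a single two-hospital instance on which no deterministic mechanism can both compete with $\sopt^{n/4}$ and respect individual rationality; the same family will in fact work for every deterministic mechanism at once. The point to exploit is that with $p = 0$ and truthful reports the mechanism sees the entire compatibility graph, and, being deterministic, it outputs one fixed path from the altruist. So I only have to design the graph so that \emph{every} path from the altruist fails one of the two criteria. I will build the graph around a single ``fork'': one branch is an internal continuation the altruist owner is entitled to, the other is a cross edge into a much longer path living in the second hospital (which is exactly why two hospitals are needed — a path using that escape route contributes nothing to the altruist owner's utility).

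Concretely, let hospital $a$ own the altruist $\alpha$ and attach to $\alpha$ a simple $a$-internal path $P_0$ of length $n/4$, ending at a node $v^\ast$. From $v^\ast$ I place two out-edges: an internal edge into a further simple $a$-internal path $P_1$ of length $n/4$, and a cross edge into a simple $b$-internal path $Q$ of length $n/2$; no other edges exist, and in particular nothing leaves $Q$. Then $\irpath = P_0 P_1$ with $|\irpath| = n/2$, while the longest path from $\alpha$ all of whose segments have length at least $n/4$ is $P_0$ followed by a hop into $Q$, so $\totlen{\sopt^{n/4}} = 3n/4$; the only competitor, the all-internal path $P_0 P_1$, is a valid long-segments path but has length just $n/2$.

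The analysis is a short case split on the path $\pi$ returned by the mechanism. Every path from $\alpha$ traverses $P_0$ and then either stays inside $a$ (possibly entering $P_1$) or hops into $Q$ at $v^\ast$; since nothing leaves $Q$, these two cases are exhaustive. If $\pi$ stays inside $a$, then $|\pi| \le |\irpath| = n/2 = \totlen{\sopt^{n/4}} - n/4$, a $\Theta(n)$ welfare loss against $\sopt^{n/4}$. If $\pi$ hops into $Q$, the only $a$-nodes on $\pi$ are those of $P_0$, so the altruist owner's utility is $n/4$, whereas by dropping out and matching internally (equivalently, diverting $\pi$ at $\alpha$ into $\irpath$) she obtains $|\irpath| = n/2$, a $\Theta(n)$ violation of individual rationality. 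Either way the claim holds, and, since $\totlen{\sopt^s}$ and $\totlen{\avgopt^s}$ are monotone nonincreasing in $s$, the same instance falls short of $\sopt^s$ and $\avgopt^s$ by $\Theta(n)$ as well for every $s \le n/4$.

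I do not expect a real obstacle here; the one thing that needs care is calibrating the three path lengths. We need $|Q|$ to exceed $|P_1|$ by $\Theta(n)$, so that the welfare-optimal long-segments path is forced through hospital $b$ and is \emph{not} the internally realizable path $P_0 P_1$ — otherwise the mechanism could simply output $P_0 P_1$ and meet both desiderata at once. Simultaneously $|P_1|$ must itself be $\Theta(n)$, so that diverting into $b$ genuinely costs the altruist owner a $\Theta(n)$ chunk of $\irpath$. And $|P_0|$ should be exactly the segment threshold $n/4$, so that $P_0$ qualifies as a first segment of $\sopt^{n/4}$. Any lengths meeting these constraints work (the choice $n/4, n/4, n/2$ above being one convenient option); the rest is routine bookkeeping.
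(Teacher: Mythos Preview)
Your proposal is correct and essentially identical to the paper's proof: the paper's instance (for $n=4k$) is precisely your fork construction with $|P_0|=k$, $|P_1|=k$, $|Q|=2k$, and the paper's case split (path length $\leq 2k$ versus $>2k$) is exactly your distinction between $\pi$ staying inside~$a$ versus hopping into~$Q$. Your observation that the same family works uniformly for all deterministic mechanisms is implicit in the paper's argument as well.
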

\begin{proof}
For any $k \in \mathbb{N}_{\geq 1}$, consider the instance given in \cref{fig:worstcase}.
Clearly, $\sopt^{n/4}$ will select the path of length $3 \, k = 3/4 \, n$.
If the mechanism selects a path of length at most $2 \, k$, it falls short of the optimal welfare by at least $k = n/4$.
Else, the path must visit some of hospital~2's nodes, and therefore can visit at most $k$ nodes in hospital~1.
However, hospital~1 is the altruist owner and can match all her $2 \, k$ nodes internally outside of the mechanism.
Thus, the mechanism must either fall short of the welfare of $\sopt^{n/4}$ by $n/4$, or it must give hospital~1 an incentive of at least $n/4$ to not participate in the mechanism.
\end{proof}
\Cref{thm:icworstcase} in \cref{app:lower} shows an analogous result for incentive compatibility (rather than individual rationality) with respect to $\sopt^{n/6}$.

At a high level, the graphs that give rise to these results are problematic because they force the mechanism to choose between a branch with higher welfare and a branch that satisfies key individual hospitals (in \cref{thm:irworstcase}, the altruist owner), who, if unsatisfied, could remove the high-welfare option altogether by not participating or hiding critical nodes.
Intuitively, our semi-random model gives positive results because, when $p$ is large enough, there are enough non-adversarial cross-edges for the mechanism to avoid these problematic decision points in the graph.
For example, in the instance in \cref{fig:worstcase}, if random edges allow for the stitching of hospital~1's $2\,k$-length path in front of the path of hospital~2, then obtaining high welfare and satisfying hospital~1 become compatible.

\subsection{Competing with Optimal Welfare in the Semi-Random Model}
Unfortunately, the low number of random edges we allow for in our model do not entirely prevent these incompatibilities between incentives and welfare.
In particular, within our model, we can still force the mechanism to make these difficult choices on a ``small scale''\,---\,that is, between segments that are so short that we likely cannot stitch them using random edges.
These small-scale choices become a problem for our guarantees in instances where the mechanism must make them repeatedly.
We use this observation in the proof of \cref{thm:irnoopt} to construct instances in which our mechanism cannot compete with $\opt$ while maintaining individual rationality. The corresponding result for incentive compatibility is shown by \Cref{thm:icnoopt} in \cref{app:lower}.
\label{sec:irnoopt}
\begin{theorem}
    \label{thm:irnoopt}
    Fix any $p \in o(1/n)$ and any deterministic mechanism.
    Then, there exists a family of instances such that, with $\Omega(1)$ probability, the mechanism will on each instance either
    obtain $\Theta(n)$ less welfare than $\opt$ or violate individual rationality by $\Theta(n)$ .
\end{theorem}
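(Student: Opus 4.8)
The plan is to carry over the intuition behind \cref{thm:irworstcase} --- a forced choice between a high-welfare branch and a branch that keeps the altruist owner whole --- but to split the one big decision into $\Theta(n)$ tiny, spatially separated ones, so that the at most $o(n)$ random cross edges are too few and too local to stitch any of them away. Use two hospitals. Let hospital~$1$ be the altruist owner and let its entire induced subgraph be one directed path $R = R[1] \to \cdots \to R[L_1]$ with $\alpha = R[1]$; then, robustly (random edges are cross edges), $\irpath = R$ and $|\irpath| = L_1 = \Theta(n)$. Let hospital~$2$'s subgraph be a vertex-disjoint union of $t = \Theta(n)$ ``detour gadgets'' $H_1, \dots, H_t$, each a directed path on $2g$ vertices for a fixed constant $g$, attached to $R$ only by base cross edges $R[q_j] \to H_j[1]$ and $H_j[2g] \to R[q_j+g+1]$ at evenly spaced positions $q_j$ chosen so that the windows $\{R[q_j+1], \dots, R[q_j+g]\}$ are pairwise disjoint and contain no endpoint of any attachment edge. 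The crucial feature is that taking detour $j$ trades the $g$ vertices of window $j$ for the $2g$ vertices of $H_j$: it raises welfare by exactly $g$ and lowers hospital~$1$'s utility by exactly $g$.

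Consequently, on the base graph every path from $\alpha$ is just a choice of a set $S$ of detours to take, and the trade-off makes the sum of the path's welfare and of hospital~$1$'s utility on it independent of $S$ --- it equals $2L_1$ once the path reaches $R[L_1]$ and never exceeds $2L_1$. Meanwhile the all-detours path is a legal $\alpha$-path of length $L_1 + gt = L_1 + \Theta(n)$, so $|\opt| \ge L_1 + \Theta(n)$, whereas $|\irpath| = L_1$. Hence for \emph{any} path $\pi$ the mechanism could output on the base graph, the welfare gap against $\opt$ plus the individual-rationality gap against $\irpath$ is at least $|\opt| + |\irpath| - 2L_1 \ge \Theta(n)$, so one of the two gaps is $\Theta(n)$. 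This already settles the case $p = 0$ and recreates the worst-case obstruction on a ``small scale''.

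For general $p \in o(1/n)$, I would show the random edges cost only an $o(n)$ correction. Since there are at most $n^2$ potential cross edges, the expected size of $E_p$ is at most $p\,n^2 \in o(n)$, so a standard application of Markov's inequality gives $|E_p| \in o(n)$ with probability $1 - o(1) = \Omega(1)$; condition on that event. For the mechanism's output $\pi$ under truthful reports, I then bound the same quantity (welfare of $\pi$ plus hospital~$1$'s utility on $\pi$) by $2L_1 + o(n)$. Call a gadget \emph{touched} if some random edge is incident to one of its vertices; there are at most $|E_p| \in o(n)$ of them, and because cross edges only join the two hospitals (so distinct gadgets are never adjacent), each touched gadget contributes at most $2g = O(1)$ extra vertices to $\pi$. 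An untouched gadget has a unique entry and a unique exit, so $\pi$ either ignores it, ends inside it (at most one gadget), or takes its full detour; and when $\pi$ takes the full detour of $H_j$, window $j$ can reappear on $\pi$ only via a random edge pointing into that window, of which there are at most $|E_p| \in o(n)$ in total, each able to salvage at most $g$ vertices. Summing these $O(1)$-per-object corrections over the $o(n)$ exceptional gadgets and windows gives the claimed $o(n)$ slack; plugging it into the base-graph computation yields a combined welfare-plus-IR gap of $\Theta(n) - o(n) = \Theta(n)$ with probability $\Omega(1)$, which is exactly what the theorem asserts. I expect the main obstacle to be precisely this robustness step --- arguing that a handful of extra cross edges cannot be exploited, whether to re-enter a skipped window of $R$ and run along it before leaving (controlled by window-disjointness and the absence of inter-gadget edges) or to slip into a gadget without ``paying'' for its window of $R$. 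The remaining work (the exact spacing of the $q_j$, the precise constants, and verifying $|\irpath|$ and $|\opt|$) is routine.
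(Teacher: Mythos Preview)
Your proposal is correct and shares the paper's high-level strategy --- replace the single global fork of \cref{thm:irworstcase} by $\Theta(n)$ local welfare-versus-IR tradeoffs, so that the $o(n)$ random cross edges can repair only $o(n)$ of them --- but the execution is noticeably different. Your gadget is tuned so that each detour changes welfare by $+g$ and hospital~$1$'s utility by $-g$, yielding the exact invariant $|\pi| + |\pi \cap V_1| \le 2L_1$ on the base graph; the conclusion $(\text{welfare gap}) + (\text{IR gap}) \ge |\opt| + |\irpath| - 2L_1 - o(n) = \Theta(n)$ then falls out in one line with no case split. The paper's instance (\cref{fig:noopt}) instead uses blocks whose two branches have $3$ versus $4$ nodes together with two length-$k$ tails, so no clean conservation law holds; the paper therefore conditions on whether hospital~$1$ receives at least $\tfrac{11}{4}\,k$ nodes and, in the ``yes'' case, derives contradictory upper and lower bounds on the number of high-welfare branches chosen. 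Your robustness step is a direct charging of the excess against $|E_p|$ via touched gadgets and salvaged windows, whereas the paper simply bounds the number of blocks with an incoming random edge by $o(n)$ and carries that slack through its counting inequalities. Both routes are sound; yours trades a slightly more careful construction for a shorter and more transparent argument.
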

\begin{figure}[!tb]
\centering
\begin{minipage}{.38\textwidth}
  \centering
    
    \includegraphics[height=.49\textwidth]{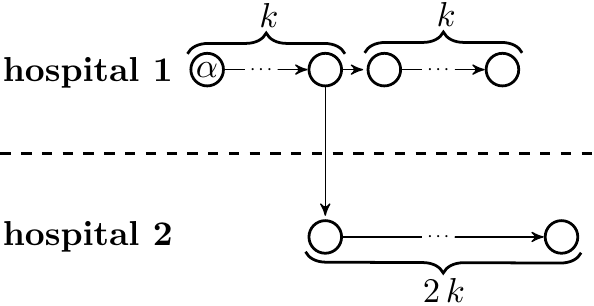}
    \caption{Without random edges, individually rational mechanisms cannot compete with $\sopt^{n/4}$ and $\avgopt^{n/4}$\!.}
    \label{fig:worstcase}
\end{minipage}\hspace{.04\textwidth}%
\begin{minipage}{.58\textwidth}
  \centering
    \includegraphics[width=\textwidth]{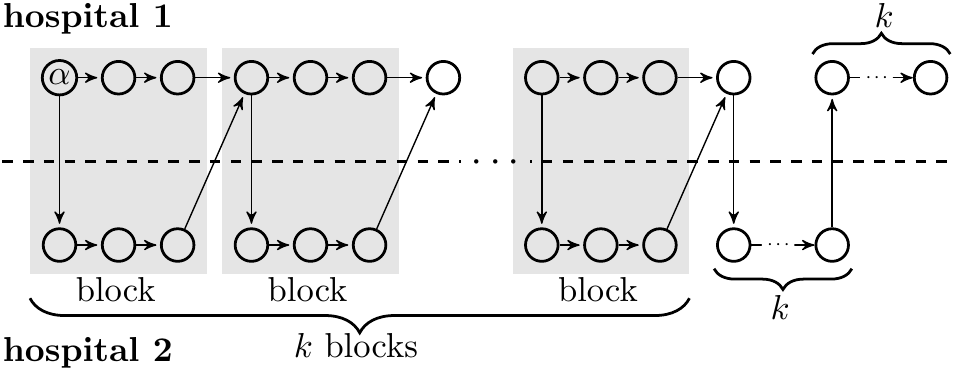}
    \caption{Even with random edges, individually rational mechanisms cannot compete with maximum welfare.}
    \label{fig:noopt}
\end{minipage}
\end{figure}
\begin{proof}
For any $k \in \mathbb{N}_{\geq 1}$, consider the base graph of size $8 \, k + 1$ in \cref{fig:noopt}.
Even relying only on deterministic edges, there is always a path of length $6 \, k + 1$, which can be obtained by branching into hospital~2's subgraph in every block ($k$ times one node from hospital~1 and three nodes from hospital~2) and going through both segments of length $k$ at the end.

Observe that the number of random edges added to this base graph is binomially distributed with success probability $p$ and at most $n^2$ trials.
With probability $1/2$, the number of random edges is at most its rounded expectation, which is less than $p \, n^2 + 1 \in o(n)$.
We therefore condition on the assumption that the number of random edges is at most $p \, n^2 +1$ for the remainder of the proof.

If the mechanism matches at most $11/4 \, k$ of hospital~1's nodes with probability at least $1/2$, then with total probability $1/4$ there is an incentive of at least $3 \, k + 1 - 11/4 \, k = k/4 + 1 \geq  n/32$ for hospital~1 to match outside of the mechanism, concluding the proof of this case.

Else, we additionally condition on the event of more than $11/4 \, k$ of hospital~1's nodes being matched.
At least $7/4 \, k$ of these nodes must lie within the blocks.
For the sake of contradiction, assume that the mechanism obtains welfare of at least $(6 - 1/4) \, k + 1= 23/4 \, k + 1$.
Then, at least $15/4 \, k$ nodes must be matched within the blocks.
Since there are $o(n)$ random edges in total, at most $o(n)$ blocks have some incoming random edge.
Call the number of these blocks $b$.
In each of the $k - b$ remaining blocks, the mechanism has to decide between the upper and lower branch, i.e., it cannot take more than one of hospital~1's nodes while taking more than $3$ nodes in total.
Let $\ell$ be the number of blocks in which the upper branch is chosen.
Since at least $7/4 \, k$ of hospital~1's nodes are matched, it must hold that $(b + \ell) \, 3 + (k - b - \ell) \, 1 \geq 7/4 \, k$, and thus that $\ell \geq 3/8 \, k - b$.
Since at least $15/4 \, k$ nodes are matched in total, it must hold that $b \, 6 + \ell \, 3 + (k - b - \ell) \, 4 \geq 15/4 \, k$ and thus that $\ell \leq 1/4 \, k + 2 \, b$.
Since $b \in o(n) = o(k)$, for large $n$, these inequalities are in contradiction.
We conclude that, with total probability at least $1/4$, welfare is at least $k/4$ lower than $6 \, k + 1$, which is a lower bound on the length of $\opt$.
\end{proof}

\subsection{Discussion of Lower Bounds}
\label{sec:lowerdiscussion}
In the previous two sections, we have shown that both individual rationality and incentive compatibility imply that a mechanism has to match substantially fewer nodes than the longest feasible path.
It is natural to wonder whether these incentive properties are worth the loss in welfare.

However, if strong incentives push hospitals towards deviation, the cost to welfare can be significantly higher.
Indeed, the example in \cref{fig:noopt} illustrates that maximizing welfare without concern for incentives can lead to lower welfare than is given by our mechanisms.
By reasoning similar to that used in \cref{thm:irnoopt}, a welfare-maximizing mechanism will (with high probability) give the altruist owner $\Theta(n)$ incentive to match outside of the mechanism, giving the altruist owner utility $3 \, k+1$ and hospital~2 utility $0$.
By contrast, one can verify that, for $s \leq (n-1)/8$, both mechanisms that we develop in this paper (with high probability) choose a path giving the altruist owner utility in $\big(4 - o(1)\big) \, k$ and hospital~2 utility in $\big(1 - o(1)\big) \, k$.
Except for $o(n)$ nodes more or fewer, this path will visit all of hospital~1's nodes in the blocks, and all the nodes in the $k$-length segments of both hospitals.
Thus, using our mechanisms rather than na\"ive welfare maximization benefits both hospitals in this situation.

In the rest of this section, we explain why we consider the long-segments and high-averages benchmarks to be natural responses to the lower bound in \cref{thm:irnoopt}.
We start with the simple observation that, in our semi-random model, it is advantageous for a hospital to have long internal paths.
This is because long internal paths are more likely to have cross-edges that are ``useful'' to the mechanism\,---\,that is, cross-edges that the mechanism could use to increase overall welfare.
When a hospital has useful cross-edges to and from its internal paths, the mechanism will incorporate many of the hospital's nodes into a high-welfare path, an outcome which gives utility to the hospital while increasing overall welfare, thereby aligning the objectives of this hospital and the mechanism.

Unfortunately, long internal paths do not guarantee a hospital high utility in the mechanism, because a hospital's long paths give it utility only to the extent that the mechanism can use them to create welfare.
This problem arises in the example in \cref{fig:noopt}, in which hospital~1 has a long internal path, but yet ends up getting $\Omega(n)$ less utility than it would receive by leaving the mechanism altogether.
What happens to hospital~1 in this example illustrates the broader problem addressed by our benchmarks:
to even approximate maximum welfare, the mechanism's path must use many short segments of hospital~1's internal path, leaving the remainder chopped into pieces that are (with high probability) too short to be stitched in later.
For each of these short segments, then, hospital~1 gains little utility (one node) while losing a comparable amount of potential utility (2 nodes).
By this reasoning, any path with welfare close to $\opt$ is unacceptable to hospital~1, which would be better off matching its internal path outside of the mechanism.

Our benchmark $\sopt^s$ addresses this problem by enforcing that, for each hospital, all their segments in the path have length at least $s$.
We will choose $s$ to be much larger than the maximum amount of nodes on a long path that might become useless for stitching due to a single segment.
By the reasoning above, this restriction guarantees that the utility each hospital gains from segments included in the mechanism's path outweighs the potential loss in stitchable paths.
The high-averages benchmark $\avgopt^s$ is based on a slightly more refined version of the argument for $\sopt^s$, which recognizes the fact that it suffices if the utility from \emph{all} segments outweighs the losses afflicted by \emph{all} segments, rather than each segment paying off individually.
Obviously, there exist base graphs on which obtaining more utility than our benchmarks would not be in conflict with incentives.
However, our benchmarks allow us to make a welfare guarantee for all base graphs, which should be high for graphs with long internal paths and ensures that the mechanism makes use of a wide range of opportunities for matching between hospitals.

\section{A Simple Incentive-Compatible Mechanism}
\label{sec:s}
It remains to show that we indeed can compete with the high-averages benchmark while providing matching-time incentive compatibility.
To build intuition for the full result, we begin with a simple mechanism, which only satisfies traditional incentive compatibility (but not individual rationality and matching-time incentive compatibility) and only competes with the long-segments benchmark.

Designing mechanisms would be fairly easy if our graphs were Erd\H{o}s-Rényi style graphs with high edge probability, because the mechanism could simply construct a path including essentially all nodes, and no hospital would have an incentive to deviate.
Unfortunately, the low probability with which our random edges appear prevents us from applying this idea.
Instead, suppose that, for each hospital~$i$, we fix a set $\Paths_i$, which contains disjoint $i$-internal paths that have length much larger than $2 \, s''$ for some $s''$.
Then, consider these paths as the nodes in a meta graph, where the meta graph has an edge between nodes $\pi_1$ and $\pi_2$ iff $G$ has an edge from one of the last $s''$ nodes of path $\pi_1$ to one of the first $s''$ nodes of path $\pi_2$. 
Thus, edges in the meta graph exist whenever at least one of the $s''^2$ potential random edges between $\pi_1$ and $\pi_2$ exists in $G$.
These edges occur independently between meta edges and with probability $1 - (1 - p)^{s''^2}$.
Therefore, if $s''$ is large enough, the meta graph will be densely connected, and we will be able to stitch all paths in $G$ into a single long path.

\begin{algorithm}[tbp]
\DontPrintSemicolon
\SetKwInOut{Input}{input}
\SetKwInOut{Output}{output}
\SetAlgorithmName{Mechanism}{mechanism}{List of Mechanisms}

\Input{reported graph of size $n \gg 1$, parameter $s \gg \log (n)^2$}
\Output{success indicator, path $\pi$}
$s' \leftarrow s / \log n$, $s'' \leftarrow s / \log (n)^2$\;\label{lin:s1:1}
\If{$\nexists$ internal path of length $s'$ from $\alpha$}{\label{lin:s1:2}
\Return ``success'', minimum path $\{\alpha\}$
}\label{lin:s1:3}
\ForEach{hospital $i$}{\label{lin:s1:4}
$\Paths_i \leftarrow$ largest set of disjoint $i$-internal paths each of length exactly $s'$; for $i=a$, one path must start at $\alpha$.
}\label{lin:s1:5}
$j \leftarrow \argmax_{i} |\Paths_i|$, break ties by hospital id.\;\label{lin:s1:6}
$\Paths_j \leftarrow$ set of disjoint $j$-internal paths of max.\ total length subject to (1)~the number of paths being at least $\max_{i \neq j} |\Paths_i|$ and at most $\sum_{i \neq j} |\Paths_i| + \boldone\{j=a\}$, (2)~all paths having size $\geq s'$, and (3)~if $j=a$, one path starting at $\alpha$.\;\label{lin:s1:7}
order the paths $\bigcup_{i} \Paths_i$ in a hospital-alternating sequence such that the path starting at $\alpha$ is the first in the sequence.\;\label{lin:s1:8}
check if paths can be stitched in order by finding an edge from the last $s''$ nodes of the $t$th path in the sequence to the first $s''$ nodes of the $(t+1)$th path for all $t$.\;\label{lin:s1:9}
\uIf{this stitching succeeds}{\label{lin:s1:10}\Return ``success'', stitched path $\pi$}\label{lin:s1:11}
\Else{\label{lin:s1:12}\Return ``failure'', minimum path $\{\alpha\}$}\label{lin:s1:13}
\caption{Long-segments mechanism}\label[mechanism]{alg:s1}
\end{algorithm}
The mechanism built from these intuitions is shown in \cref{alg:s1}.
It takes in the reported graph including node ownership and the identity of the altruist and a parameter $s$.
Recall that the altruist owner is denoted by $a$.

Initially, the mechanism defines for each hospital~$i$ a set $\Paths_i$ of disjoint $i$-internal paths of equal length $s' \gg 2 \, s''$ (\cref{lin:s1:5}).
The main complication standing in the way of stitching $\upaths$ is that random edges only allow for the stitching of paths in different hospitals.
Thus, no hospital can contribute more than half (rounded up for $a$) of all paths in $\upaths$.
To deal with this constraint, in \cref{lin:s1:7}, the mechanism redefines $\Paths_j$ for the hospital with the largest number of paths such that the hospital now has at most as many paths as all other hospitals combined (plus one, for $a$).
In \cref{lin:s1:8}, the mechanism orders all paths in $\bigcup_i \Paths_i$ in a sequence such that the path owners alternate.
We will show further below that this is always possible by applying the following combinatorial lemma, whose proof we defer to \cref{sec:lemballs}:
\begin{restatable}{lemma}{lemballs}
    \label{lem:balls}
    Let there be a finite number of balls with colors in $[c]$, such that there are $n_1 \geq n_2 \geq \cdots \geq n_c$ many balls of each color.
    If $n_1 \leq \sum_{i \geq 2} n_i$, the balls can be arranged in a cyclical sequence such that no two adjacent balls have the same color.
    In particular, they can be ordered in a linear, color-alternating sequence such that an arbitrary ball can be chosen as the first.
\end{restatable}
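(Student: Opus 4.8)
The plan is to prove the cyclic statement by an explicit placement of the balls and then derive the linear statement by cutting the cycle immediately before whichever ball we want first. Write $N \coloneqq \sum_{i=1}^{c} n_i$. The hypothesis $n_1 \le \sum_{i\ge 2} n_i$ will be used only through its immediate consequence $2n_1 \le N$, i.e.\ $n_1 \le \lfloor N/2\rfloor \le \lceil N/2\rceil$; I abbreviate $m \coloneqq \lceil N/2\rceil$, so $n_1 \le m$. Note also that as soon as there is at least one ball we have $N \ge 2$ (otherwise $n_1 \ge 1 > 0 = N - n_1$), the case $N = 0$ being vacuous; so I may assume $N \ge 2$, hence $m \ge 1$ and $N - m = \lfloor N/2\rfloor \ge 1$.

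For the construction, first list the balls as $B_0, B_1, \dots, B_{N-1}$ with the colors grouped together, so every color class occupies a contiguous block of indices; since block $t$ has length $n_t \le n_1 \le m$, any two balls of the same color have indices differing by at most $n_1 - 1 < m$. Now take $N$ cyclic positions $0, 1, \dots, N-1$ and place $B_k$ at position $2k$ for $0 \le k < m$ (the ``even slots''), and $B_{m+j}$ at position $2j+1$ for $0 \le j < N-m$ (the ``odd slots''). A short count shows this is a bijection onto $\{0,\dots,N-1\}$: the even slots are exactly the even numbers in that range and the odd slots exactly the odd ones.

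The heart of the argument is to check that no two cyclically adjacent positions carry balls of the same color. Inverting the placement, for consecutive positions $(p,p+1)$ with $p < N-1$ the two ball-indices differ by exactly $m$ when $p$ is even and by exactly $m-1$ when $p$ is odd; the wrap-around pair $(N-1,0)$ has ball-indices $0$ and either $N-1$ or $\lfloor N/2\rfloor$, a gap of at least $n_1$. Since $m \ge n_1$, a gap of $m$ or more already exceeds $n_1 - 1$ and so cannot occur between same-colored balls. The only delicate case is a gap of exactly $m-1$ with $m-1 < n_1$, which forces $m = n_1$ and hence (as $n_1 \le \lfloor N/2\rfloor \le m$) that $N$ is even with $n_1 = N/2$. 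There I would argue structurally rather than by counting: the offending adjacency pairs $B_{j+1}$ with $B_{m+j}$, where the lower index $j+1$ lies in $[1,n_1-1]$, hence in the first color block, whereas the higher index $m+j \ge n_1$ lies in a later block; so the two balls have different colors regardless. This completes the cyclic claim, and the ``arbitrary first ball'' part follows by cutting the cycle just before the designated ball, since deleting the single wrap-around adjacency can only destroy, never create, a monochromatic adjacency.

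I expect the main obstacle to be precisely this borderline adjacency analysis ($N$ even, $n_1 = N/2$, including the wrap-around position), where the crude bound ``index gap exceeds $n_1-1$'' fails by exactly one and one must instead track which color blocks the two balls in question occupy. Everything else — verifying the placement is a bijection, and the parity bookkeeping that computes the index gaps of adjacent slots — is routine. I would present the adjacency check as a three-way case split (even $p$, odd $p$, wrap-around) with the borderline sub-case called out explicitly.
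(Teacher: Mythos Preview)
Your proof is correct and, like the paper's, proceeds by an explicit placement of the color-sorted ball list followed by cutting the cycle; but the placement itself is genuinely different. The paper distributes the balls into $n_1$ stacks by cycling through them (so ball $k$ lands on stack $k \bmod n_1$), then concatenates the stacks cyclically. Its adjacency argument is short and structural: every stack bottom is color~$1$ (from the first pass), every stack has height at least~$2$ (since $n_1 \le N/2$), so every stack top is \emph{not} color~$1$, and within a stack consecutive balls sit $n_1$ indices apart and hence carry different colors. No index arithmetic or borderline case analysis is needed. Your construction instead interleaves two runs of lengths $\lceil N/2\rceil$ and $\lfloor N/2\rfloor$ into the even and odd positions, which is equally valid but forces the explicit index-gap computation and the $n_1 = N/2$ sub-case you correctly flagged. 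The trade-off: the paper's argument is shorter and sidesteps case analysis entirely, while yours is more self-contained (it does not rely on the reader seeing why stack tops are never color~$1$) and makes all the numerics explicit.
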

For our choices of parameters, we will show that, with high probability, the meta graph will be so dense that the chosen order of $\upaths$ forms a path in the meta graph, which means that the paths in $\upaths$ can be stitched in order.
Assuming that this stitching succeeds, the stitched path is returned in \cref{lin:s1:11}; in the low-probability event of failure, the mechanism returns in \cref{lin:s1:12}.

We now comment on the well-definedness of \cref{alg:s1}.
In \cref{lin:s1:1}, we ignore the rounding of $s'$ and $s''$.
Since we assume $s$ to be large, this does not influence the asymptotics of our analysis.
The check in \cref{lin:s1:2} ensures that \cref{lin:s1:5} is well-defined.
\Cref{lin:s1:7} is well-defined because an appropriately chosen subset\footnote{That is, if $j \neq a$, any subset of size $\max_{i \neq j} |\Paths_i|$; else, any subset of size $\min\left(1, \max_{i \neq j} |\Paths_i|\right)$ that includes the path starting at $\alpha$.} of the original $\Paths_j$ satisfies the requirements.
For \cref{lin:s1:8}, we need to show that such an ordering exists.
If $|\Paths_j| \leq \sum_{i \neq j} |\Paths_i|$, \cref{lem:balls} directly gives us a sequence.
Else, by \cref{lin:s1:7}, it must be that $j=a$ and $|\Paths_j| = \sum_{i \neq j} |\Paths_i| + 1$.
Then, \cref{lem:balls} allows to order all paths in $\upaths$ except for the one starting at $\alpha$\footnote{If the path starting at $\alpha$ is the only path in $\upaths$, an ordering exists trivially.} such that the sequence starts with a hospital other than the altruist owner.
Then, adding the altruist path to the front of the sequence will produce a valid hospital-alternating sequence.
Finally, in \cref{lin:s1:9}, for large $n$, note that $s' = \log n \, s'' \gg 2 \, s''$, so that the nodes used for stitching into and out of a path do not overlap.

With these things out of the way, our theorem is the following:
\begin{theorem}
\label{thm:s1}
    Let there be at least two hospitals, and let $s, p$ be parameters varying in $n$ such that $p \, s^2 \in \omega\big(\log (n)^5\big)$.
    Then, for any sequence of instances with such $n$ and $p$, running \cref{alg:s1} on the full graph succeeds with high probability.
    In this case, call the resulting path $\pi$.
    It must hold that $|\pi| \geq \big(1 - o(1)\big) \, |\sopt^s|$ and, by hiding vertices, any hospital~$i$ can receive at most utility $\big(1 + o(1)\big) \, |\pi \cap V_i|$ from the mechanism.
\end{theorem}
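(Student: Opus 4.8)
The plan is to establish the three assertions — that the mechanism succeeds with high probability, that its path is a $(1-o(1))$-approximation to $\sopt^s$, and that hiding vertices cannot help by more than a $(1+o(1))$ factor — after recording two quantitative facts that underlie all of them. Write $s' = s/\log n$ and $s'' = s/\log(n)^2$ as in the mechanism. \emph{Fact 1 (stitches fail rarely):} each of the at most $|\upaths|-1 \le n$ stitches attempted in \cref{lin:s1:9} fails only if all $s''^2$ potential random cross-edges between the relevant $s''$-suffix and $s''$-prefix (a valid, disjoint set of potential edges, since consecutive paths lie in different hospitals and $s' \ge 2s''$) are absent, which has probability $(1-p)^{s''^2} \le e^{-p s''^2}$; as $p s''^2 = p s^2/\log(n)^4 \in \omega(\log n)$, this is $o(1/n)$, so a union bound over the polynomially many stitches shows the stitching, hence the mechanism, succeeds with high probability (the degenerate case where the mechanism returns in \cref{lin:s1:3} counts as success, and there $\sopt^s$ is the empty path, since no segment from $\alpha$ can have length $\ge s > s'$, so efficiency is vacuous; I carry this case along separately). \emph{Fact 2 (chopping is cheap):} every path placed into $\upaths$ has length $\ge s' = s''\log n$, so each stitch deletes at most $2s''$ nodes, giving $|\pi| \ge \totlen{\upaths} - 2s''(|\upaths|-1) \ge (1-2/\log n)\,\totlen{\upaths}$ for the collection \emph{after} the rebalancing of \cref{lin:s1:7}; symmetrically each $\rho \in \Paths_i$ retains all but at most $2s''$ of its nodes, so hospital~$i$'s truthful utility satisfies $|\pi \cap V_i| \ge (1-2/\log n)\,\totlen{\Paths_i}$.

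For efficiency it then remains to show $\totlen{\upaths} \ge (1-o(1))\,|\sopt^s|$ for the rebalanced collection. Decompose $\sopt^s$ into its segments; for hospital~$i$ let $\sigma_{i,1},\sigma_{i,2},\dots$ be its segments, $L_i = \sum_l |\sigma_{i,l}|$, and $n_i$ their number. Chopping each segment greedily into blocks of length exactly $s'$ exhibits $|\Paths_i| \ge \sum_l \lfloor |\sigma_{i,l}|/s'\rfloor \ge n_i \log n$ (each segment has length $\ge s = s'\log n$) and, multiplying by $s'$, the key bound $|\Paths_i|\,s' \ge (1-1/\log n)\,L_i$. For every hospital $i \neq j$ (where $j$ is the rebalanced hospital of \cref{lin:s1:6}) this already bounds its contribution $|\Paths_i|s'$ to $\totlen{\upaths}$ from below, so everything reduces to showing the rebalanced $\Paths_j$ has total length $\ge (1-o(1))L_j$. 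Let $\mu := \max_{i\neq j}|\Paths_i|$ and $m := \sum_{i\neq j}|\Paths_i| + \boldone\{j=a\}$ be the count bounds of \cref{lin:s1:7}, and $Q_j := \sum_l \lfloor|\sigma_{j,l}|/s'\rfloor$. Since consecutive segments of $\sopt^s$ lie in different hospitals and its first segment belongs to $a$, one gets $n_j \le \sum_{i\neq j} n_i + \boldone\{j=a\} \le \sum_{i\neq j}|\Paths_i| + \boldone\{j=a\} = m$. If $Q_j \ge \mu$, the $\sopt^s$-segments of $j$, partially chopped to bring the count up to $\max(\mu,n_j) \in [\mu,m]$, form a feasible choice in \cref{lin:s1:7} of total length $L_j$ (the $\alpha$-block surviving when $j=a$). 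If instead $Q_j < \mu$, then $n_j \le Q_j/\log n < \mu/\log n$, so $L_j \le (Q_j+n_j)s' \le \mu s'(1+1/\log n)$; taking any $\mu$ of the $|\Paths_j| \ge \mu$ length-$s'$ paths in the original $\Paths_j$ (including the one from $\alpha$ when $j=a$) is feasible and has total length $\mu s' \ge (1-o(1))L_j$. Either way $\totlen{\Paths_j} \ge (1-o(1))L_j$; combining with Fact 2 and $\sum_i L_i = |\sopt^s|$ yields $|\pi| \ge (1-o(1))|\sopt^s|$.

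For incentive compatibility, fix the realization of the random edges on the high-probability event above, suppose hospital~$i$ reports $V_i'\subseteq V_i$ while everyone else is truthful, and let $\pi'$ be the resulting path. Every $i$-node of $\pi'$ lies on an $i$-internal path selected for $i$, so $i$'s manipulated utility is at most $\totlen{\Paths_i}$ computed in the misreport run (after rebalancing); I will bound this by the truthful (rebalanced) $\totlen{\Paths_i^{\mathrm{tr}}}$ and then invoke Fact 2, $|\pi\cap V_i|\ge(1-2/\log n)\totlen{\Paths_i^{\mathrm{tr}}}$. The structural observation is that misreporting only shrinks $i$'s subgraph, so (a) $|\Paths_i|$ can only drop while the $|\Paths_k|$ for $k\neq i$ — hence the thresholds $\mu,m$ of \cref{lin:s1:7} — are unchanged, and (b) $i$ cannot newly become the $\argmax$ hospital. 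If $i$ is not the rebalanced hospital in the misreport, its chosen paths form a set of disjoint length-$s'$ paths realizable in $G[V_i]$ too, giving total $\le|\Paths_i^{\mathrm{tr}}|\,s'$, which is at most the truthful rebalanced total; the only subtle sub-case is when $i$ \emph{was} rebalanced truthfully, where that total is still $\ge \min(|\Paths_i^{\mathrm{tr}}|,m)\,s' \ge \mu s'$, whereas the misreport total is strictly below $\mu s'$ exactly because $i$ fell out of the $\argmax$. If $i$ is the rebalanced hospital in the misreport, it also was truthfully (by (b)), with identical thresholds and altruist constraint, and the family of feasible path sets in the misreport run is contained in that of the truthful run (paths in $G[V_i']$ are paths in $G[V_i]$), so the optimum of \cref{lin:s1:7} can only shrink. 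Hence in every case the misreport rebalanced total is at most the truthful one, so $i$'s utility is at most $(1-2/\log n)^{-1}|\pi\cap V_i| = (1+o(1))|\pi\cap V_i|$; the boundary situations (either run entering \cref{lin:s1:3}, or the misreport stitching failing) only give $i$ utility at most $\boldone\{i=a\}$, which is likewise dominated.

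I expect the efficiency step — bounding the loss of the count-rebalancing in \cref{lin:s1:7} — to be the crux, as it is the only place the benchmark's structure genuinely matters: it works precisely because each $\sopt^s$-segment, having length $\ge s = s'\log n$, both chops into $\ge\log n$ usable sub-blocks (forcing $n_j \ll \mu$ in the otherwise-troublesome regime) and alternates with other hospitals' segments (bounding $n_j$ by the path-count budget $m$). The incentive-compatibility argument is conceptually light — hiding nodes can only demote $i$ in the $\argmax$ comparison — but needs careful bookkeeping over whether $i$ is rebalanced truthfully and/or in the misreport, plus the early-return and stitching-failure corner cases.
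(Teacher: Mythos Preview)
Your proof is correct and follows essentially the same three-lemma decomposition as the paper (success with high probability via a union bound over stitches; welfare via chopping $\sopt^s$-segments into $s'$-blocks and checking the rebalanced $\Paths_j$; incentive compatibility via monotonicity of the per-hospital totals under the $\argmax$ rule). The one substantive difference is in the welfare step for the rebalanced hospital $j$: the paper argues by showing the \emph{upper} count bound $m=\sum_{i\neq j}|\Paths_i|+\boldone\{j=a\}$ always dominates the number of $j$'s benchmark segments, whereas you split on whether $Q_j \gtrless \mu$ and thereby handle the \emph{lower} count bound $\mu$ explicitly; your version is arguably tighter on this point, but both reach the same $(1-o(1))L_j$ conclusion.
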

To prove this theorem, we show in the following lemmas that the mechanism indeed succeeds on the truthful graph with high probability (\cref{thm:s1succeeds}), that it achieves high welfare in this case (\cref{thm:s1welfare}), and that, if the mechanism succeeds on the truthful graph, manipulation can only increase utility by lower-order term (\cref{thm:s1ic}).

\begin{lemma}
\label{thm:s1succeeds}
    If the parameters $p$ and $s$ vary in $n$ such that $p \, s^2 \in \omega\big(\log (n)^5\big)$, \cref{alg:s1} succeeds with high probability under truthful reports.
\end{lemma}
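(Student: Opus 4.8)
The plan is to reduce the statement to a union bound over the at most $n/s'$ stitching attempts performed in \cref{lin:s1:9}. First I would handle the degenerate branch: if the graph reported under truthful reports contains no internal path of length $s'$ from $\alpha$, then the mechanism returns ``success'' in \cref{lin:s1:3} and there is nothing to prove. Otherwise, as established when discussing the well-definedness of \cref{alg:s1}, the mechanism reaches \cref{lin:s1:9} holding a hospital-alternating sequence $\pi_1, \dots, \pi_m$ of pairwise vertex-disjoint internal paths, each of length at least $s'$, and it returns ``success'' in \cref{lin:s1:11} exactly when each of the $m-1$ stitching attempts finds a connecting edge.

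The main estimate is then a simple one. Since the $\pi_t$ are vertex-disjoint and each has length at least $s'$, we have $m \le n/s'$. The $t$-th attempt inspects the $s''^2$ ordered pairs running from the last $s''$ nodes of $\pi_t$ to the first $s''$ nodes of $\pi_{t+1}$; because $\pi_t$ and $\pi_{t+1}$ lie in different hospitals, each of these pairs belongs to $\bigcup_{i \ne j}(V_i \times V_j)$ and is therefore present in $G$ with probability at least $p$, these events being mutually independent over the $s''^2$ pairs (which are genuinely distinct because $\pi_t$ and $\pi_{t+1}$ are disjoint, and within a single path the first and last $s''$ nodes do not overlap since $s' = s'' \log n \gg 2 s''$). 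Hence the $t$-th attempt fails with probability at most $(1-p)^{s''^2} \le e^{-p s''^2}$, and a union bound over the $m-1 < n/s'$ attempts yields $\Pr[\text{some attempt fails}] \le (n/s') \, e^{-p s''^2} \le n \, e^{-p s''^2}$.

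It remains to substitute the parameters. With $s'' = s / \log (n)^2$ we get $p \, s''^2 = p \, s^2 / \log (n)^4$, which by the hypothesis $p \, s^2 \in \omega\big(\log (n)^5\big)$ is $\omega(\log n)$; consequently $n \, e^{-p s''^2} = \exp\big(\ln n - \omega(\log n)\big) = n^{-\omega(1)} \to 0$, so with high probability every stitching attempt succeeds and the mechanism returns ``success''. I expect the only steps requiring any care to be (i) verifying that the edges consulted during stitching are exactly the independent cross-hospital random edges drawn from disjoint node-pairs across attempts, so that the per-attempt failure bound and the union bound are legitimate, and (ii) the bookkeeping that converts the hypothesis on $p \, s^2$ into an exponent $p \, s''^2$ large enough to swamp the $\log n$ loss coming from the union bound over $\le n/s'$ paths; neither is difficult, and everything else is routine.
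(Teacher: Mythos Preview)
Your proposal is correct and follows essentially the same route as the paper: handle the trivial return in \cref{lin:s1:3}, then bound each stitching failure by $(1-p)^{s''^2}$ and apply a union bound over at most $n$ stitches, using $p\,s''^2 = p\,s^2/\log(n)^4 \in \omega(\log n)$ to conclude. Your bound of $n/s'$ on the number of stitches is slightly sharper than the paper's $n$, but you immediately relax it to $n$ anyway; and note that your remark~(i) about disjoint node-pairs \emph{across} attempts is unnecessary, since the union bound does not require independence between stitching attempts.
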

\begin{proof}
    If the branch in \cref{lin:s1:2} is taken, the mechanism succeeds deterministically.
    Else, it will get to \cref{lin:s1:9}.
    Even if we only consider the random edges, each individual stitching fails with probability at most $(1 - p)^{(s'')^2}$.
    Adding the deterministic edges can only decrease this probability.
    Using a union bound over at most $n$ paths, we conclude that the stitching\,---\,and thus the mechanism\,---\,fails with probability at most
    \[ n \, (1 - p)^{s''^2} \leq \exp(\log n - p \, s^2 / \log (n)^4) \to 0. \qedhere \]
\end{proof}

\begin{lemma}
    \label{thm:s1welfare}
    Suppose that $s \in \omega\big(\log (n)^2\big)$.
    Whenever \cref{alg:s1} succeeds on the truthful reports, it achieves welfare of at least $\big(1 - o(1)\big) \, |\sopt^s|$.
\end{lemma}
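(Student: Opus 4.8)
The plan is to compare the welfare of the path $\pi$ returned by the mechanism against $|\sopt^s|$, tracking exactly three sources of loss: (i)~paths in $\sopt^s$ that are too short to survive the mechanism's segmentation into blocks of length $s'$, (ii)~the redefinition of $\Paths_j$ in \cref{lin:s1:7}, which caps the dominant hospital's contribution, and (iii)~the $O(s'')$ nodes lost at each stitch point. I would argue that each loss is an $o(1)$ fraction of $|\sopt^s|$, so that $|\pi| \geq \big(1-o(1)\big)\,|\sopt^s|$.

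First, I would handle the degenerate case: if \cref{lin:s1:2} fires, there is no internal path of length $s'$ from $\alpha$; I would argue this forces $\sopt^s$ to be short as well. Since every segment of $\sopt^s$ has length $\geq s \gg s'$, and the first segment starts at $\alpha$, the only way $\sopt^s$ can be nonempty with a first segment of length $\geq s$ is if such an internal path exists — contradiction — so $\sopt^s=\{\alpha\}$ and the mechanism's output $\{\alpha\}$ matches it exactly. Assuming now that the mechanism proceeds past \cref{lin:s1:2}, I would fix an optimal $\sopt^s$ path and look at its segments $\sigma_1,\sigma_2,\dots$ Each $\sigma_t$ is an internal path of some hospital of length $\geq s$. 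Chopping $\sigma_t$ greedily into disjoint internal subpaths of length \emph{exactly} $s'$ yields $\lfloor |\sigma_t|/s'\rfloor \geq |\sigma_t|/s' - 1$ such subpaths, wasting at most $s' - 1$ nodes per segment; since there are at most $|\sopt^s|/s$ segments, the total waste here is at most $(s'/s)\,|\sopt^s| = |\sopt^s|/\log n \in o(1)\cdot|\sopt^s|$. This shows that, across all hospitals, the maximal disjoint collections $\Paths_i$ chosen in \cref{lin:s1:5} have total size $\sum_i |\Paths_i| \geq \big(1-o(1)\big)\,|\sopt^s|/s'$ (each $\Paths_i$ is chosen to be largest possible, so it is at least as large as the collection induced from $\sopt^s$'s segments in hospital~$i$).

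Second, I would bound the loss from \cref{lin:s1:7}. Write $N = \sum_i |\Paths_i|$ (the post-\cref{lin:s1:5} count) and let $j$ be the argmax hospital. The redefinition keeps $|\Paths_j| \geq \max_{i\neq j}|\Paths_i| \geq \frac{1}{2}(N - |\Paths_j|_{\text{old}})$ roughly, but the cleaner bound is: after \cref{lin:s1:7}, the total count $\sum_i|\Paths_i|$ is at least $\min(N,\, 2\max_{i\neq j}|\Paths_i|) \geq$ (a constant fraction of $N$), since there are at least two hospitals. Actually since we only need welfare up to $o(1)$ factors and each retained path still has length $\geq s'$, the key point is: if $|\Paths_j|$ was dropped from $n_1$ to roughly $\sum_{i\neq j} n_i$, the new total is $\geq 2\sum_{i\neq j}n_i \geq n_1 + \sum_{i\neq j} n_i - (\text{correction}) \geq$ half of $N$ is far too lossy — so here I would instead argue more carefully that $\sopt^s$ itself must alternate hospitals at its hops, hence the \emph{segments of $\sopt^s$ lying in hospital $j$} already number at most $1 + (\text{segments not in }j)$, i.e.\ $\sopt^s$ never over-uses a single hospital. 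Therefore the segment-derived collection that certifies $|\Paths_j| \geq (\text{stuff})$ can itself be taken to satisfy the cap in \cref{lin:s1:7}, so \cref{lin:s1:7} loses nothing relative to $\sopt^s$: the new $\Paths_j$ has total length at least that of $\sopt^s$'s segments in hospital $j$ (up to the chopping waste already accounted for). This is the step I expect to be the main obstacle — making precise that the hospital-alternation structure of $\sopt^s$ means the cap in \cref{lin:s1:7} is never binding against the benchmark, only against the possibly-larger maximal collection $\Paths_j$ from \cref{lin:s1:5}.

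Finally, since the mechanism succeeds (by hypothesis), it reaches \cref{lin:s1:11} and returns the stitched path $\pi$ through all paths in $\upaths$ in the hospital-alternating order. Stitching concatenates all $|\upaths| = \sum_i |\Paths_i|$ internal paths (each of length $\geq s'$), losing at most $2s''$ nodes per path to the stitch prefixes/suffixes (the last $s''$ of one path and first $s''$ of the next). Hence $|\pi| \geq \sum_{\pi'\in\upaths}(|\pi'| - 2s'') \geq \big(1 - 2s''/s'\big)\sum_{\pi'\in\upaths}|\pi'| = \big(1 - 2/\log n\big)\,\totlen{\upaths}$. Combining: $\totlen{\upaths} \geq \big(1-o(1)\big)\,|\sopt^s|$ from the first two steps (each path has length exactly $s'$ and there are $\geq (1-o(1))|\sopt^s|/s'$ of them, modulo the cap discussion), and the stitching loss is another $(1-o(1))$ factor, so $|\pi| \geq \big(1-o(1)\big)\,|\sopt^s|$, as desired.
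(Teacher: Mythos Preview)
Your proposal follows essentially the same route as the paper: bound the chopping loss, argue that the cap in \cref{lin:s1:7} is not binding against the benchmark because segment ownership alternates in $\sopt^s$, and then bound the stitching loss. Steps (i) and (iii) are fine and match the paper.

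Step (ii) has the right idea but a genuine imprecision that you yourself flag as ``the main obstacle.'' You write that ``the segment-derived collection \dots\ can itself be taken to satisfy the cap,'' but it is not clear whether you mean $j$'s full segments or their $s'$-chops. If you mean the $s'$-chops, the claim is false: $j$ could own a single segment of length $1000\,s$ while all other hospitals together own one segment of length $s$; then $j$'s chopped collection has roughly $1000\log n$ paths but the cap $\sum_{i\neq j}|\Paths_i|$ is only $\log n$. The fix, which the paper uses, has two pieces. First, the redefinition in \cref{lin:s1:7} allows paths of \emph{minimum} length $s'$, so you may take $j$'s $\sopt^s$-segments themselves (at most $\lceil r/2\rceil$ of them, where $r$ is the total number of segments) as candidate paths, covering all of $|\sopt^s\cap V_j|$. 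Second, you must verify these fit under the cap: each of the $\geq \lfloor r/2\rfloor$ non-$j$ segments has length $\geq s$ and hence contributes at least $\lfloor s/s'\rfloor=\log n$ paths to $\sum_{i\neq j}|\Paths_i|$, so the cap is $\geq \log n\cdot\lfloor r/2\rfloor$, which for large $n$ dominates $\lceil r/2\rceil$. Your inequality ``$j$'s segments $\leq 1+(\text{non-}j\text{ segments})$'' alone compares segment counts to segment counts, whereas the cap counts $s'$-paths; the factor-$\log n$ amplification is the missing link. Once you make this explicit, your argument goes through and coincides with the paper's.
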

\begin{proof}
    Assuming that all hospitals are truthful, consider the path $\sopt^s$.
    In \cref{lin:s1:4,lin:s1:5}, every path segment in $\sopt^s$ of length $\ell \geq s$ could be cut up into $\lfloor\ell / s'\rfloor \geq \ell / s' - 1 \geq (1 - 1/\log n) \, \ell / s'$ disjoint paths of length $s'$, which means that $\Paths_i$ must contain at least $(1-1/\log n) \, |\sopt^s \cap V_i|/s'$ many paths.
    Thus, the total length of each set $\Paths_i$ is at least $(1 - 1/\log n) \, |\sopt^s \cap V_i|$.

    Some additional reasoning is needed to show that the same remains true after $\Paths_j$ is redefined in \cref{lin:s1:7}.
    Namely, $\sum_{i \neq j} |\Paths_i| + \boldone\{j=a\}$ could be lower than the number of $s''$-length paths needed to cover a $(1 - 1/\log n)$ fraction of nodes in $j$'s segments in $\sopt^s$.
    Suppose that this is the case, and let $\sopt^s$ contain $r$ many segments in total.
    
    Suppose for now that $j \neq a$.
    Then, hospital $j$ cannot have owned more than $r/2$ of the segments in $\sopt^s$ since the ownership alternates between segments and the altruist owns the first segment.
    By the reasoning of the first paragraph of this proof, the at least $r/2$ many segments belonging to other hospitals mean that the $\Paths_i$ for all $i \neq j$ together contain at least $\log n \, r / 2$ many paths, which, for large $n$, is larger than $r/2$.
    Then, the redefinition of $\Paths_j$ in \cref{lin:s1:5} permits at least $r/2$ paths of \emph{minimum} length $s'$, which allows to cover all nodes in $\sopt^s \cap V_j$.
    
    Else, i.e., if $j = a$, essentially the same reasoning applies.
    Hospital $j$ cannot have owned more than $\lceil r/2 \rceil$ of the segments in $\sopt^s$, and the $\Paths_i$ for $i \neq j$ will together contain at least $\log n \, \lfloor r/2 \rfloor$ paths.
    For large $n$, this number is at least $\lceil r/2 \rceil - 1$, which means that the redefinition of $\Paths_j$ allows for at least $\lceil r/2 \rceil$ paths; again, all nodes in $j$'s segments of $\sopt^s$ could be covered by a set of paths satisfying the constraints in the redefinition of $\Paths_j$.
    Thus, whether $j=a$ or not, the total length of $\Paths_j$ continues to be at least $ (1 - 1/\log n) \, |\sopt^s \cap V_j|$.

    Through the stitching process, the length of each path in $\upaths$ (which has size at at least $s'$) might decrease by up to $2 \, s''$, so each hospital $i$ receives utility of at least $(1 - 2/\log n) \, (1 - 1/\log n) \, |\sopt^s \cap V_i| \geq (1 - 3/\log n) \, |\sopt^s \cap V_i|$ from the returned path.
    Adding up over all hospitals, it follows that the social welfare is at least a $(1 - 3/\log n)$ fraction of $|\sopt^s|$.
\end{proof}

\begin{lemma}
    \label{thm:s1ic}
    Again, suppose that $s \in \omega(\log (n)^2)$.
    Assuming that \cref{alg:s1} succeeds under truthful reports and returns a path $\pi$, each hospital can only gain $\big(1 + o(1)\big) \, |\pi \cap V_i|$ utility by hiding vertices.
\end{lemma}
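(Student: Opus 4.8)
The plan is to show that hiding vertices can increase hospital~$i$'s utility by at most a $(1+o(1))$ factor by tracking how $i$'s utility is generated by the mechanism. The key structural observation is that, by \cref{lin:s1:5,lin:s1:7}, the paths in $\Paths_i$ that the mechanism selects for hospital~$i$ depend \emph{only on hospital~$i$'s own reported subgraph}, except through the single quantity $\max_{i' \neq i}|\Paths_{i'}|$, which controls the lower bound on the number of paths $i$ must contribute if $i = j$, and which hospital~$i$ cannot influence. Moreover, every path in $\Paths_i$ has length at least $s'$, and each such path survives stitching with at most $2\,s''$ nodes lost; conversely, each path contributes at least $s' - 2\,s'' = (1 - 2/\log n)\,s'$ nodes to the final path $\pi$. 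Hence $i$'s utility under any report $V_i'$ satisfies
\[
  |\pi \cap V_i| \;\geq\; (1 - 2/\log n)\, s' \cdot |\Paths_i(V_i')|
  \qquad\text{and}\qquad
  |\pi \cap V_i| \;\leq\; s' \cdot |\Paths_i(V_i')| ,
\]
since all of $i$'s nodes on $\pi$ lie on the stitched-together paths of $\Paths_i(V_i')$ (there are no $i$-internal edges used outside these paths, and cross edges contribute no nodes of $i$ beyond the endpoints already counted). So it suffices to show that $|\Paths_i(V_i')| \leq |\Paths_i(V_i)|$ for every $V_i' \subseteq V_i$, i.e.\ that reporting truthfully maximizes the number of length-$s'$ paths the mechanism packs into $i$'s subgraph.

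The first step is the monotonicity of the packing subroutine in \cref{lin:s1:5}: a largest set of disjoint $i$-internal paths of length exactly $s'$ can only shrink when edges and vertices are removed, so $|\Paths_i|$ before the redefinition is maximized under truthful reporting. The second step handles the redefinition in \cref{lin:s1:7}, which only applies to the single hospital $j = \argmax_{i'}|\Paths_{i'}|$. Here I would argue in cases. If $i \neq j$ under truthful reports and $i$ stays a non-maximizer after manipulation, $i$'s path count is untouched by \cref{lin:s1:7} and the first step finishes it. If $i$ becomes the new maximizer after hiding nodes, then its path count has only decreased (by step one) while some other hospital's count is now the largest, so this can only hurt $i$. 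The genuinely delicate case is $i = j$ under truthful reports: then \cref{lin:s1:7} replaces $\Paths_j$ by a max-total-length packing of disjoint $j$-internal paths, each of length $\geq s'$, with the number of paths constrained to lie in $[\max_{i'\neq j}|\Paths_{i'}|,\ \sum_{i'\neq j}|\Paths_{i'}| + \boldone\{j=a\}]$. Hiding nodes cannot increase $\max_{i'\neq j}|\Paths_{i'}|$ (that is computed from the untouched hospitals' subgraphs), cannot increase $\sum_{i'\neq j}|\Paths_{i'}|$, and can only shrink $j$'s subgraph; since the post-redefinition count is at most the number of length-$\geq s'$ disjoint paths packable into $j$'s subgraph, and the final utility is governed by the total length of those paths up to the $2\,s''$-per-path stitching loss, truthful reporting is (up to lower-order terms) optimal. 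I should also note that if manipulation causes the branch in \cref{lin:s1:2} to be taken, $i$ receives at most one node, so that case is trivial.

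The main obstacle, and the part requiring the most care, is the $i = j$ case combined with the interaction between the \emph{number} of paths and their \emph{total length} in \cref{lin:s1:7}: hiding nodes could conceivably let hospital~$j$ reduce $\sum_{i' \neq j}|\Paths_{i'}|$ and thereby tighten the upper bound on the number of $j$-paths in a way that forces the optimizer toward \emph{fewer but longer} paths — but fewer-but-longer cannot increase total length relative to the truthful optimum (which faced a weakly looser constraint set on a weakly larger subgraph), so total length, and hence $|\pi \cap V_j|$ up to the $(1 \pm O(1/\log n))$ stitching factors, is maximized truthfully. Assembling the $(1 - 2/\log n)$ lower bound under truthful reports against the $s' \cdot |\Paths_i|$ upper bound under manipulation yields a ratio of $\big(1 - 2/\log n\big)^{-1} = 1 + o(1)$, which is the claimed bound; I would state this final arithmetic explicitly and conclude.
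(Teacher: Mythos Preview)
Your approach is essentially the paper's --- bound each hospital's utility by the total length of its $\Paths_i$, show that hiding nodes can only shrink that total, and absorb the $2\,s''$-per-path stitching loss into a $1/(1-2/\log n)$ factor --- but the write-up contains two genuine slips you should fix.

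First, the displayed upper bound $|\pi\cap V_i|\le s'\cdot|\Paths_i(V_i')|$ is false for hospital~$j$: after \cref{lin:s1:7}, paths in $\Paths_j$ have length $\ge s'$, not exactly $s'$, so the correct upper bound is the \emph{total length} $\totlen{\Paths_j}$, not $s'\cdot|\Paths_j|$. Consequently the reduction ``it suffices to show $|\Paths_i(V_i')|\le|\Paths_i(V_i)|$'' is only valid for $i\ne j$. You do switch to a total-length argument for $j$ in the next paragraph, which is the right move (and is what the paper does), but you should not present the count reduction as sufficient for all $i$. Relatedly, your worry that $j$ could ``reduce $\sum_{i'\ne j}|\Paths_{i'}|$'' is moot: that sum is computed from the other hospitals' internal subgraphs and is literally unchanged by $j$'s report, so the constraint interval in \cref{lin:s1:7} is fixed and the feasible set only shrinks, which immediately gives $\totlen{\Paths_j(V_j')}\le\totlen{\Paths_j(V_j)}$.

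Second, your case ``If $i$ becomes the new maximizer after hiding nodes'' (for $i\ne j$ under truthful reports) is vacuous: hiding nodes can only decrease $|\Paths_i|$ while leaving all $|\Paths_{i'}|$, $i'\ne i$, unchanged, so $i$ cannot overtake the truthful maximizer. What you need instead (and what the paper handles) is the case where the truthful maximizer $j$ hides enough nodes to \emph{lose} maximizer status; then its manipulated $\Paths_j$ consists of at most $\max_{i'\ne j}|\Paths_{i'}|$ paths of length exactly $s'$, whose total length is at most that of the truthful redefined $\Paths_j$ (which has at least that many paths, each of length $\ge s'$). Finally, you should also note that if a manipulation causes the mechanism to fail in \cref{lin:s1:13} (stitching fails), the returned path is $\{\alpha\}$ and no hospital gains --- you handled \cref{lin:s1:3} but not this branch.
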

\begin{proof}
    If the mechanism returns a trivial path in \cref{lin:s1:3} under truthful reporting, hiding nodes cannot change this outcome.
    If a manipulation causes the mechanism to return in lines~\ref{lin:s1:3} or \ref{lin:s1:13}, the mechanism returns a subpath of the truthful path, and no hospital can gain from this manipulation.
    We may thus assume that the mechanism successfully returns in \cref{lin:s1:11} under truthful reports and only consider maniplations that make the mechanism return in \cref{lin:s1:11} as well.
    
    First, consider a hospital $i$ who cannot form enough paths of length $s'$ in \cref{lin:s1:5} to become the special hospital $j$.
    Let $r$ be the number of paths in $\Paths_i$ under truthful reporting.
    Since the mechanism succeeded, truthful reporting guarantees a utility of at least $r \, (s' - 2 \, s'')$ to hospital~$i$.
    Since no manipulation can increase the number of chosen paths in $\Paths_i$ and since only nodes in $\upaths$ can end up in a stitched path, the possible utility is bounded above by $r \, s' = 1/(1 - 2 / \log n) \, r \, (s' - 2 \, s'')$.
    Thus, the benefits of manipulation are of lower order.

    Second, consider a hospital $j$ who, under truthful reporting, will have the highest number of paths.
    Fix the updated $\Paths_j$ under truthful reporting.
    If a manipulation by $j$ does not change the fact that the hospital is chosen as $j$ in \cref{lin:s1:6}, any such manipulation can only reduce the total size of $\Paths_j$ in \cref{lin:s1:7}.
    Again, such a change can only be advantageous in terms of the stitching loss, which is lower-order as above.

    If, however, the manipulation turns this hospital into one of the hospitals $i$ who are not chosen in \cref{lin:s1:6}, the manipulation must lead to $\Paths_i$ as chosen in \cref{lin:s1:5} containing at most $\max_{i' \neq i} |\Paths_i^s|$ paths.
    Thus, $\Paths_i$ contains fewer paths than the updated $\Paths_j$ under truthful reporting, and each path has  smaller length.
    It follows that the total length of $\Paths_i$ decreases in the manipulation, which means that the hospital may at most gain lower-order utility from a more advantageous stitching.
\end{proof}
Note that, when we defined $s'$ as $s/\log n$ and $s''$ as $s/\log n$, any other function $f(n) \in \omega(1)$ could have replaced the logarithm.
By choosing a function $f$ that grows very slowly, the precondition of \cref{thm:s1} can be slightly weakened to $\lim_{n \to \infty} p \, s^2 / f(n)^4 - \log n \to \infty$.

\section{Matching-Time Incentive Compatibility with Higher Welfare}
\label{sec:avg}
\begin{algorithm}[!tb]
\DontPrintSemicolon
\SetKwInOut{Input}{input}
\SetKwInOut{Output}{output}
\SetAlgorithmName{Mechanism}{mechanism}{List of Mechanisms}
\SetKwRepeat{Do}{do \{}{\} while}
\SetKwFor{RepTimes}{repeat}{times}

\Input{reported graph of size $n \gg 1$, parameter $s \gg \log (n)^2$}
\Output{success indicator, path $\pi$}
$s' \leftarrow s / \log (n)^2$\;\label{lin:avg2:1}
\ForEach{hospital $i$}{\label{lin:avg2:2}
    $\paths_i \leftarrow \log n \times$max.\ number of disj.\ $i$-internal paths such that average path length $\geq s$.\;\label{lin:avg2:3}
}
say that hospitals $i$ with $\paths_i > 0$ are \emph{active}.\;\label{lin:avg2:4}
\If{$a$ inactive or if only $a$ is active}{\label{lin:avg2:5}
    \Return ``success'', $\irpath$
}\label{lin:avg2:6}
$j \leftarrow \argmax_{i} |\Paths_i|$, break ties by hospital id.\;\label{lin:avg2:7}
$\paths_j \leftarrow \min(\paths_j, \sum_{i \neq j} \paths_i)$\;\label{lin:avg2:8}
\ForEach{active hospital $i$}{\label{lin:avg2:9}
$\Paths_i \leftarrow$ set of up to $\paths_i$ disjoint $i$-internal paths of maximum total length.
}\label{lin:avg2:10}
from each $\Paths_i$, remove all paths of length less than $4 \, s'$\;\label{lin:avg2:remshort}
\ForEach{active hospital $i$}{\label{lin:avg2:addshort1}
    \RepTimes{$\paths_i - |\Paths_i|$}{\label{lin:avg2:addshort2}
        split a suffix of length $4 \, s'$ off the end of longest path in $\Paths_i$; add to $\Paths_i$.
    }\label{lin:avg2:addshort}
}
$\Explored \leftarrow \emptyset, \nxt \leftarrow \alpha$\;\label{lin:avg2:14}
\tikzmk{A}\Do{($|\Explored| < s'$ and can successfully set $\nxt$ to some node in an active hospital such that $x \to \nxt$ is an edge, $x \in \Explored$, and $\nxt \notin \Explored$)}{\label{lin:avg2:15}
    $\Explored \leftarrow \Explored \cup \{\nxt\}$\;\label{lin:avg2:16}
    \If{$\nxt$ lies on some $\pi \in \bigcup_{i} \Paths_i$}{\label{lin:avg2:17}
        $\sigma \leftarrow$ the suffix of $\pi$ starting at $\nxt$\;\label{lin:avg2:18}
        remove $\sigma$ from $\pi$\;\label{lin:avg2:19}
        \uIf{$|\sigma| \geq s'$}{\label{lin:avg2:20}
            \Return \hyperref[alg:stitch1]{\texttt{stitch1()}}
        } \label{lin:avg2:21} \Else {\label{lin:avg2:22}
            $\Explored \leftarrow \Explored \cup \sigma$
        }\label{lin:avg2:23}
}
}\label{lin:avg2:24}\tikzmk{B}
 \boxit{highlight}
\uIf{$|\Explored| < s'$} {\label{lin:avg2:25}
\Return ``success'', $\irpath$
}\label{lin:avg2:26}
\uElseIf{some neighbor of $v \in \Explored$ is in first $s'$ nodes of some $\pi \in \bigcup_{i} \Paths_i$}{\label{lin:avg2:27}
    \Return \hyperref[alg:stitch2]{\texttt{stitch2()}}
}\label{lin:avg2:28} \Else{\label{lin:avg2:29}
    \Return ``failure'', $\irpath$
}\label{lin:avg2:30}
\caption{High-averages mechanism}
\label[mechanism]{alg:avg}
\end{algorithm}

\begin{procedure}[!t]
\DontPrintSemicolon
$\sigma' \leftarrow$ path from $\alpha$ through $\Explored$ to $\nxt$, then $\sigma$\;\label{lin:stitch1:1}
let $\langle \pi, \pi_2, \dots, \pi_{r} \rangle$ be a hospital-alternating ordering of $\bigcup_i \Paths_i$ starting at $\pi$ such that the owner of $\pi_r$ is not the owner of $\pi$.\;\label{lin:stitch1:1b}
\uIf{$|\pi| < 2 \, s'$}{\label{lin:stitch1:2}
$\mathit{seq} \leftarrow \langle \sigma', \pi_2, \dots, \pi_r \rangle$
}\label{lin:stitch1:3}\Else{\label{lin:stitch1:4}
$\mathit{seq} \leftarrow \langle \sigma', \pi_2, \dots, \pi_r, \pi \rangle$
}\label{lin:stitch1:5}
for each $t$, try find an edge between the last $s'$ nodes in the $t$th path in $\mathit{seq}$ to the first $s'$ nodes in the $(t+1)$th path.\;\label{lin:stitch1:6}
\uIf{this stitching succeeds}{\label{lin:stitch1:7}
    \Return ``success'', stitched path
}\label{lin:stitch1:8} \Else {\label{lin:stitch1:9}
    \Return ``failure'', $\irpath$
}\label{lin:stitch1:10}
\caption{stitch1()}
\label{alg:stitch1}
\end{procedure}

\begin{procedure}[!t]
\DontPrintSemicolon
$\sigma \leftarrow$ the path from $\alpha$ through $\Explored$ to $v$\;
arrange $\bigcup_{i} \Paths_i$ in hospital-alternating sequence starting with $\pi$.\;
try to stitch the sequence of paths in order, using up to $s'$ nodes each time; add the already established stitch between $\sigma$ and $\pi$ to the front.\;
\uIf{this stitching succeeds}{
    \Return ``success'', stitched path
} \Else {
    \Return ``failure'', $\irpath$
}

\caption{stitch2()}
\label{alg:stitch2}
\end{procedure}
As we showed in the previous section, \cref{alg:s1} guarantees incentive compatibility while competing with the long-segments benchmark in terms of welfare.
While this positive result is encouraging, we would like to strengthen both the incentive properties and the efficiency of our mechanism.
Before explaining our stronger \cref{alg:avg} in more detail, we explain why these properties are needed and how we modify the original mechanism to obtain them.

\subsection{Strengthening \texorpdfstring{\cref{alg:s1}}{Mechanism 1}: Desiderata and Changes}
\paragraph{Stronger notions of incentive compatibility.}
\Cref{alg:s1} guarantees incentive compatibility, i.e., that hiding nodes cannot increase a hospital's utility from the path returned by the mechanism.
Unfortunately, it is not clear that the real path of transplantations will be what the mechanism suggests.
Indeed, at any point in the execution of the transplantation chain, a hospital might pretend that the chain ended at one of its vertices because the donor reneged, and then use the donated kidney to make more internal matches.
Even without such foul play, the lack of individual rationality in \cref{alg:s1} might disincentivize the altruist owner from participating in the mechanism.

Clearly, to satisfy individual rationality, the mechanism must be more generous to the altruist owner than in \cref{lin:s1:3} of \cref{alg:s1}, choosing a long internal path instead of just the altruist.
\Cref{alg:avg} does so, most relevantly in \cref{lin:avg2:6,lin:avg2:26}, where $\irpath$, the longest internal path starting at the altruist, is returned.

More importantly, while the requirement in \cref{alg:s1} that all paths in $\Paths_i$ for $i \neq j$ have fixed length $s'$ is convenient for analysis, it can incentivize hospitals to divert the path at the end of their last segment to extend the length of the segment.
To address this shortcoming of \cref{alg:s1}, our new mechanism must be more flexible in how it selects the $\Paths_i$, perhaps in a way that more closely resembles the redefinition of $\Paths_j$ in \cref{lin:s1:7} of \cref{alg:s1}.
However, while allowing for a wider range of $\Paths_i$, we still need to ensure that we can stitch the resulting paths, i.e., that all paths in $\upaths$ have a certain minimum length, and that no hospital has more paths than all other hospitals combined.
\Cref{alg:avg} accomplishes this by first determining a number of paths $\paths_i$ for each hospital $i$ (\cref{lin:avg2:2,lin:avg2:3,lin:avg2:7,lin:avg2:8}).
It then initializes $\Paths_i$ by selecting up to $\paths_i$ paths in a way that maximizes the total length of this set without any further constraints (\cref{lin:avg2:10}), and modifies $\Paths_i$ afterwards to make sure that its paths have an adequate minimum length and that their number is exactly $\paths_i$ (\cref{lin:avg2:remshort,lin:avg2:addshort1,lin:avg2:addshort2,lin:avg2:addshort}).
We will have to show that this always succeeds on large enough graphs, and that welfare and incentives are preserved.

\paragraph{Higher welfare.}
On the welfare side, a severe limitation of \cref{alg:s1} is that, whenever the altruist is not at the start of an internal path of length at least $s'$, the mechanism returns a trivial path.
This shortcoming is reflected in the benchmark $\sopt^s$, which is also trivial in these cases.
On the one hand, the lack of welfare in these scenarios is understandable because random edges can no longer reliably connect the altruist to long paths in other hospitals.
On the other hand, there might be deterministic cross edges that can play this role, possibly going back and forth between different hospitals for a while before reaching long paths.
If the altruist has great connectivity to paths, but these paths happen to be in other hospitals' subgraphs, we would still like our mechanism to make use of this connectivity.
The high-average benchmark $\avgopt^s$ is a way to capture this potential.
In particular, it allows the number of segments in each hospital to be arbitrarily high as long as some of these segments are long enough to make the stitching worthwhile, as we discussed in \cref{sec:lowerdiscussion}.

In contrast to \cref{alg:s1}, no path in $\bigcup_{i} \Paths_i$ has to start at the altruist in \cref{alg:avg}.
Thus, rather than stitching immediately, \cref{alg:avg} starts a graph search (from \cref{lin:avg2:14} onward), which explores nodes reachable from the altruist until it finds a node that exists on a path in the set of predetermined paths $\upaths$.
Not only is this node required to lie on a predetermined path, however; it also must also be far enough from the end of its predetermined path that, with high probability, we can stitch in another path after it.
If this search successfully finds a node on a predetermined path $\pi$, its suffix and prefix might get stitched in separately.
Since a matching-time manipulation could exploit this fact to visit a $(\paths_i + 1)$th path, all nonzero $\paths_i$ must be chosen large enough in \cref{lin:avg2:3} that a single additional path gains only lower-order amounts of utility.

Another complication is that this graph search cannot explore too many nodes because, if the initial path becomes too long, its contribution to some hospitals' utility might no longer be lower-order.
As a result, hospitals might hide nodes to force the graph search to succeed later, which might increase the length of this initial path.
To counter this, the mechanism returns with a failure in \cref{lin:avg2:30} if the search is unsuccessful after exploring more than $s'$ nodes.
Fortunately, this failure mode is unlikely to happen under truthful reports due to the random edges.
Additional cases we consider are those in which the search reaches a node on a path, but that node is too close to its end to start stitching (\cref{lin:avg2:23}), or if too few nodes are reachable from the altruist (\cref{lin:avg2:26}).

\subsection{Main Result}
We can now state our main result in full:
\begin{theorem}
\label{thm:avg}
    Let there be at least two hospitals, and let $s, p$ be parameters varying in $n$ such that $p \, s^2 \in \omega\big(\log (n)^5\big)$.
    Then, for any sequence of instances with such $n$ and $p$, running \cref{alg:avg} under truthful reports succeeds with high probability.
    In this case, call the resulting path $\pi$.
    Then, it must hold that $|\pi| \geq \big(1 - o(1)\big) \, |\avgopt^s|$ and that, by any manipulation captured by matching-time incentive compatibility, $i$ can receive at most utility $\big(1 + o(1)\big) \, |\pi \cap V_i|$.
\end{theorem}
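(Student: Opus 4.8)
The plan is to follow the three-part template of \cref{thm:s1}: under truthful reports the mechanism succeeds with high probability; conditioned on success, its welfare is at least $(1-o(1))\,|\avgopt^s|$; and, conditioned on the truthful run succeeding, no manipulation---hiding nodes, a matching-time diversion, or both---raises any hospital's utility by more than a $(1+o(1))$ factor. Only the first part uses the randomness. For it, I would first dispatch the well-definedness points the proof relies on: for large $n$ the cleanup in \cref{lin:avg2:remshort,lin:avg2:addshort} can always be carried out, since the selection in \cref{lin:avg2:10}, of total length at least $m_i s$, dwarfs the $\paths_i\cdot 4s'$ length that must be re-partitioned (this is precisely what the $\log n$ factor in $\paths_i$ buys), and the hospital-alternating orderings exist by \cref{lem:balls} because \cref{lin:avg2:8} caps $\paths_j$ at $\sum_{i\ne j}\paths_i$. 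The mechanism then fails only if a stitching procedure fails to find a cross edge, or at \cref{lin:avg2:30}. Each individual stitch fails with probability at most $(1-p)^{(s')^2}\le\exp(-p\,s^2/\log(n)^4)$, and $p\,s^2/\log(n)^4\in\omega(\log n)$, so a union bound over the at most $n$ stitches gives failure probability $o(1)$. For \cref{lin:avg2:30}: reaching \cref{lin:avg2:27} means the search never explored a node among the first $s'$ nodes of a predetermined path (such a node has suffix of length $>3s'\ge s'$ and would have triggered \texttt{stitch1}), so the $\ge s'$ explored nodes and the union of the $s'$-prefixes are disjoint sets, both determined by the base graph, and the $\Theta((s')^2)$ potential random cross edges between them are exactly what \cref{lin:avg2:27} tests for. \textbf{The main obstacle is that $\Explored$ is built adaptively from the random edges}, so these potential edges are not a priori fresh; I would resolve this with a deferred-decisions argument that exposes only the edges the search actually queries while extending $\Explored$, together with the observation that the search never reveals an edge into a prefix node (it would have followed it into \texttt{stitch1}), leaving the randomness that governs \cref{lin:avg2:27} and all stitchings untouched. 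The \texttt{stitch2} case is analogous.

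For welfare, conditioned on success I would compare the total length of the predetermined paths against $|\avgopt^s|$ hospital by hospital. For each hospital $i$ owning segments of $\avgopt^s$, those segments form at most $m_i$ disjoint internal $i$-paths (and $m_i\le\paths_i$) of average length $\ge s$ by the definition of $\avgopt^s$; hence $i$ is active and the maximum-total-length selection in \cref{lin:avg2:10} has length at least $|\avgopt^s\cap V_i|$. The cleanup discards at most $\paths_i^{\mathrm{red}}\cdot 4s'$ total length, an $o(1)$ fraction of the selection (again by the $\log n$ padding), stitching costs $O(s')$ per path and the search prefix at most another $s'$ nodes, all likewise $o(1)$ of the total. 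Summing over hospitals, and observing that when the mechanism instead returns $\irpath$ (\cref{lin:avg2:6,lin:avg2:26}) the benchmark $\avgopt^s$ is either empty or internal to hospital $a$ and hence no longer than $\irpath$, gives $|\pi|\ge(1-o(1))\,|\avgopt^s|$.

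For matching-time incentive compatibility, let $T_i$ denote the maximum total length of $\paths_i^{\mathrm{red}}$ disjoint internal $i$-paths, where $\paths_i^{\mathrm{red}}\ge\log n$. Reusing the loss bounds above, hospital $i$'s truthful utility is at least $(1-o(1))\,T_i$, and at least $(1-o(1))\,s$ when $i$ is active. Now take an arbitrary hidden-node report by $i$ followed by a diversion at some $v\in V_i$ into a fresh internal $i$-path $\pi_{\mathit{ext}}$ disjoint from the prefix. Hiding nodes only weakly decreases $\paths_i$, and a short case analysis of which hospital is the $\argmax$ in the manipulated instance shows the resulting reduced count never exceeds $\paths_i^{\mathrm{red}}$; so the predetermined $i$-paths contribute at most $T_i$ total length across at most $\paths_i^{\mathrm{red}}$ disjoint paths, the graph-search prefix contributes at most $|\Explored|\le s'=o(s)$ further $i$-nodes, and the diversion contributes exactly one more internal path. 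The key bookkeeping is to separate the search prefix's $i$-segments (possibly numerous, but of total length $O(s')$) from the $O(1)$ genuinely extra disjoint internal paths, so that $i$'s post-diversion utility is at most $O(s')+f_i\big(\paths_i^{\mathrm{red}}+O(1)\big)$, where $f_i$ is the max-total-length function. By an averaging argument (dropping the $O(1)$ shortest paths from an optimal $\paths_i^{\mathrm{red}}+O(1)$-selection) and $\paths_i^{\mathrm{red}}\to\infty$, one gets $f_i\big(\paths_i^{\mathrm{red}}+O(1)\big)\le\big(1+O(1)/\log n\big)\,T_i=(1+o(1))\,T_i$, so $i$'s utility is at most $(1+o(1))$ times its truthful utility. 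Finally, the $\irpath$-return cases are immediate: a diversion of $\irpath$ is still a simple internal path from $\alpha$ and hence no longer than $\irpath$, and any hospital other than $a$ gets nothing from an $\irpath$ return and has no node on it at which to divert.
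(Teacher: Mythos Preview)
Your proposal is essentially correct and follows the paper's own decomposition into three lemmas (success with high probability via deferred decisions and a union bound over stitches; welfare by comparing the line-10 selection to the benchmark segments hospital by hospital; matching-time incentive compatibility by bounding a deviation's utility by $f_i(\paths_i^{\mathrm{red}}+O(1))$ and an averaging argument). Two small points deserve care: first, the parenthetical ``$m_i\le\paths_i$'' is not automatic for the hospital $j$ whose count is capped in \cref{lin:avg2:8}---the paper needs the alternating-segments bound $m_j\le\sum_{i\ne j}m_i+1$ together with $\paths_i\ge\log n\cdot m_i$ to conclude $m_j\le\sum_{i\ne j}\paths_i$; second, your phrase ``both determined by the base graph'' is a slip (as you immediately note, $\Explored$ is adaptive), and the clean way to make the line-30 bound rigorous is exactly the paper's: for each newly explored $\nu_t$ its outgoing edges are fresh at the moment of insertion, so one exposes its $s'$ potential cross edges to the prefix of a fixed path in a different hospital, and the $\ge s'$ such independent batches give the $(1-p)^{(s')^2}$ bound.
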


The proof of this theorem is deferred to \cref{app:mech2proofs}.
There, we begin by going through the steps of the algorithm, which also serves to show that \cref{alg:avg} is well defined.
These observations then enable us to prove that \cref{alg:avg} succeeds with high probability~(\cref{thm:avgsucceeds}), obtains high welfare~(\cref{thm:avgwelfare}), and is matching-time incentive compatible~(\cref{thm:avgic}).
Taken together, these lemmas establish our main theorem.

\section{Alternative Modeling Assumptions}
\label{sec:assumptions}
As is often the case in the design of semi-random models, there is a range of alternatives to the modeling choices we made in \cref{sec:model}.
In this section, we discuss how some of these alternative assumptions would affect our mechanism-design problem.

For example, our model allows hospitals to hide nodes but not individual edges, a standard assumption in the existing kidney literature~\cite{AR12,TP15,AFK+15,HDH+15,BCH+17}.
In fact, however, our proofs of \cref{thm:s1ic,thm:avgic} immediately go through even if hospitals can hide not only their nodes but also internal edges and cross edges adjacent
to their nodes.
It follows that our mechanisms retain the same guarantees in this stronger model.

Additionally, we restricted our model to a single altruist and thus a single path, which simplifies the notation and makes the algorithmic problem more natural.
For appropriate generalizations of the benchmark and matching-time incentive compatibility, we believe that \cref{alg:avg,thm:avg} can be extended to multiple altruists.

Concerning the use of semi-randomness, adversarial and random steps can be combined in many ways.\footnote{This is similar to work the work of Kolla et al.~\cite{KMM11}, who show how, in the context of unique games, four different models arise from adding randomness in different phases.}
For a fixed set of nodes, fixed node ownership, a fixed altruist, and a fixed probability $p$, we can partition our semi-random model into four stages:
\begin{enumerate}
    \item \textbf{``I'':} The adversary adds internal edges.
    \item \textbf{``A'':} The adversary adds cross edges.
    \item \textbf{``R'':} Random cross edges are added with probability $p$.
    \item \textbf{``B'':} The efficiency benchmarks are computed.
\end{enumerate}
The 24 permutations of the four phases ``I'', ``A'', ``R'', and ``B'' define different semi-random models, which we will discuss exhaustively.
Note that some of these permutations define equivalent models since, if the adversarial phases ``I'' and ``A'' are adjacent, their relative order does not matter.
Furthermore, we are only interested in models where the benchmark (phase ``B'') is computed after the internal edges and at least some cross edges are inserted.

Among the remaining models, there are some (``ARIB'', ``RIAB'', ``RIBA'') in which the adversary can add internal edges adaptively to the random cross edges.
Assume that each hospital owns $\Omega(n)$ vertices and that $p \in o(1/n)$.
Then, in these semi-random models, the adversary can (with high probability) create variants of the worst-case instances in \cref{fig:worstcase}, effectively removing the benefit we get from randomness.
Indeed, with high probability, the altruist does not have random outgoing edges, only a vanishing fraction of each hospital's nodes have random edges, and there exists an isolated cross edge from the altruist owner to another hospital.
If this is the case, the adversary can build the internal paths using only nodes without random edges, and can use the isolated random cross edge to replicate the shape of the worst-case example.
As a result, any mechanism that competes with $\sopt^{c \cdot n}$ in welfare\footnote{For a small enough constant $c>0$, depending on the number of nodes per hospital.} cannot give individual rationality or incentive compatibility with high probability.
Because the argument used no adversarial cross edges, it equally applies to ``ARIB'', ``RIAB'', and ``RIBA''.

Perhaps the most interesting variant of our semi-random model is ``IRAB'', which differs from our model in that the adversarial cross edges may depend on the random cross edges.
Note that this model is more general than ours and that our lower bounds apply as a result.
Since \cref{alg:s1} does not consider the existence of cross edges at all before choosing which path to stitch, adversarial cross edges can only increase the probability of success, which means that \cref{thm:s1} continues to hold.
Unfortunately, the same cannot be said about \cref{alg:avg}, which considers cross edges in the graph search.
By adaptively adding cross edges, an adversary could guide the graph search into parts of the graph where stitching will fail.
While a version of \cref{alg:avg} without the graph search can guarantee matching-time incentive compatibility for the long-segments benchmark, we do not have results for the high-averages benchmark in this alternative model.

The two remaining models, ``IABR'' and ``IRBA'', are variants of our main model and ``IRAB'', respectively.
In these variants, the benchmark is computed not on the full compatibility graph, but on the graph with either the random or adversarial cross edges still missing.
Because the welfare attained by these benchmarks will be lower than on the full graph, positive results immediately carry over.
Since the proof of \cref{thm:irnoopt} did not make use of the fact that the optimal benchmark might increase due to random edges, this lower bound also applies to ``IABR''.

\section{Discussion}
\label{sec:discussion}
For studying the mechanism-design problem of finding long chains in kidney exchange, we proposed a semi-random model for compatibility graphs.
Since this model only assumes the existence of a very low number of random edges, this assumption can be seen as a mild regularity condition on the compatibility graph.
In this semi-random model, we developed mechanisms that limit incentives for manipulation to lower-order terms while competing with strong benchmarks in terms of welfare.

We do not claim that the mechanisms designed in this paper are ready for use in practice.
Whether they are depends on questions not yet investigated: whether the asymptotic guarantees have traction in graphs of typical size, whether hospitals can be convinced to report their pairs truthfully despite possible lower-order gains from manipulation, and whether the high-averages benchmark gets reasonably close to the optimal welfare on practical kidney-exchange graphs.
Moreover, our model abstracts from crucial aspects of the practical problem, including the existence of multiple chains and the fact that practical compatibility graphs evolve dynamically while transplantations on the chain are executed.

Thus, rather than proposing a complete solution, we hope that our work can serve as a stepping stone for future mechanisms coordinating kidney exchange with altruist-initiated chains.
Specifically, we have demonstrated that, despite the ubiquitous impossibilities of this setting, kidney exchange with donor chains is a valid target for mechanism design\,---\,as long as one is willing to trade traditional worst-case analysis for beyond worst-case methodologies.

\begin{acks}
We thank Naama Ben-David, Bailey Flanigan, Anupam Gupta, and David Wajc for helpful discussions, and we additionally thank Bailey Flanigan and the anonymous reviewers for valuable input on the draft.
This work was supported in part by the \grantsponsor{nsf}{National Science Foundation}{https://nsf.gov/} under grant \grantnum{nsf}{CCF-1733556}.
\end{acks}

\bibliographystyle{ACM-Reference-Format}
\bibliography{bibliography}


\begin{thebibliography}{26}


\ifx \showCODEN    \undefined \def \showCODEN     #1{\unskip}     \fi
\ifx \showDOI      \undefined \def \showDOI       #1{#1}\fi
\ifx \showISBNx    \undefined \def \showISBNx     #1{\unskip}     \fi
\ifx \showISBNxiii \undefined \def \showISBNxiii  #1{\unskip}     \fi
\ifx \showISSN     \undefined \def \showISSN      #1{\unskip}     \fi
\ifx \showLCCN     \undefined \def \showLCCN      #1{\unskip}     \fi
\ifx \shownote     \undefined \def \shownote      #1{#1}          \fi
\ifx \showarticletitle \undefined \def \showarticletitle #1{#1}   \fi
\ifx \showURL      \undefined \def \showURL       {\relax}        \fi
\providecommand\bibfield[2]{#2}
\providecommand\bibinfo[2]{#2}
\providecommand\natexlab[1]{#1}
\providecommand\showeprint[2][]{arXiv:#2}

\bibitem[\protect\citeauthoryear{Ashlagi, Fischer, Kash, and Procaccia}{Ashlagi
  et~al\mbox{.}}{2015}]%
        {AFK+15}
\bibfield{author}{\bibinfo{person}{I. Ashlagi}, \bibinfo{person}{F. Fischer},
  \bibinfo{person}{I.~A. Kash}, {and} \bibinfo{person}{A.~D. Procaccia}.}
  \bibinfo{year}{2015}\natexlab{}.
\newblock \showarticletitle{Mix and Match: {{A}} Strategyproof Mechanism for
  Multi-Hospital Kidney Exchange}.
\newblock \bibinfo{journal}{\emph{Games and Economic Behavior}}
  \bibinfo{volume}{91} (\bibinfo{year}{2015}), \bibinfo{pages}{284--296}.
\newblock


\bibitem[\protect\citeauthoryear{Ashlagi, Gamarnik, Rees, and Roth}{Ashlagi
  et~al\mbox{.}}{2012}]%
        {AGR+12}
\bibfield{author}{\bibinfo{person}{I. Ashlagi}, \bibinfo{person}{D. Gamarnik},
  \bibinfo{person}{M.~A. Rees}, {and} \bibinfo{person}{A.~E. Roth}.}
  \bibinfo{year}{2012}\natexlab{}.
\newblock \bibinfo{booktitle}{\emph{The Need for (Long) Chains in Kidney
  Exchange}}.
\newblock \bibinfo{type}{{T}echnical {R}eport}.
\newblock


\bibitem[\protect\citeauthoryear{Ashlagi and Roth}{Ashlagi and Roth}{2011}]%
        {AR11}
\bibfield{author}{\bibinfo{person}{I. Ashlagi} {and} \bibinfo{person}{A.~E.
  Roth}.} \bibinfo{year}{2011}\natexlab{}.
\newblock \showarticletitle{Individual Rationality and Participation in Large
  Scale, Multi-Hospital Kidney Exchange}. In
  \bibinfo{booktitle}{\emph{Proceedings of the 12th ACM Conference on
  Electronic Commerce}}. \bibinfo{pages}{321–322}.
\newblock
\showISBNx{9781450302616}


\bibitem[\protect\citeauthoryear{Ashlagi and Roth}{Ashlagi and Roth}{2012}]%
        {AR12}
\bibfield{author}{\bibinfo{person}{I. Ashlagi} {and} \bibinfo{person}{A.~E.
  Roth}.} \bibinfo{year}{2012}\natexlab{}.
\newblock \showarticletitle{New Challenges in Multihospital Kidney Exchange}.
\newblock \bibinfo{journal}{\emph{American Economic Review}}
  \bibinfo{volume}{102}, \bibinfo{number}{3} (\bibinfo{year}{2012}),
  \bibinfo{pages}{354--59}.
\newblock


\bibitem[\protect\citeauthoryear{Babaioff, Immorlica, and Kleinberg}{Babaioff
  et~al\mbox{.}}{2007}]%
        {BIK07}
\bibfield{author}{\bibinfo{person}{M. Babaioff}, \bibinfo{person}{N.
  Immorlica}, {and} \bibinfo{person}{R. Kleinberg}.}
  \bibinfo{year}{2007}\natexlab{}.
\newblock \showarticletitle{Matroids, Secretary Problems, and Online
  Mechanisms}. In \bibinfo{booktitle}{\emph{Proceedings of the 18th Annual
  ACM-SIAM Symposium on Discrete Algorithms}}. \bibinfo{pages}{434--443}.
\newblock


\bibitem[\protect\citeauthoryear{Bir{\'o}, Manlove, and Rizzi}{Bir{\'o}
  et~al\mbox{.}}{2009}]%
        {BMR09}
\bibfield{author}{\bibinfo{person}{P. Bir{\'o}}, \bibinfo{person}{D.~F.
  Manlove}, {and} \bibinfo{person}{R. Rizzi}.} \bibinfo{year}{2009}\natexlab{}.
\newblock \showarticletitle{Maximum {{Weight Cycle Packing}} in {{Directed
  Graphs}}, with {{Applications}} to {{Kidney Exchange Programs}}}.
\newblock \bibinfo{journal}{\emph{Discrete Mathematics, Algorithms and
  Applications}} \bibinfo{volume}{1}, \bibinfo{number}{4}
  (\bibinfo{year}{2009}), \bibinfo{pages}{499--517}.
\newblock
\showISSN{1793-8309, 1793-8317}


\bibitem[\protect\citeauthoryear{Blum, Caragiannis, Haghtalab, Procaccia,
  Procaccia, and Vaish}{Blum et~al\mbox{.}}{2017}]%
        {BCH+17}
\bibfield{author}{\bibinfo{person}{A. Blum}, \bibinfo{person}{I. Caragiannis},
  \bibinfo{person}{N. Haghtalab}, \bibinfo{person}{A.~D. Procaccia},
  \bibinfo{person}{E.~B. Procaccia}, {and} \bibinfo{person}{R. Vaish}.}
  \bibinfo{year}{2017}\natexlab{}.
\newblock \showarticletitle{Opting into Optimal Matchings}. In
  \bibinfo{booktitle}{\emph{Proceedings of the 28th Annual ACM-SIAM Symposium
  on Discrete Algorithms}}. \bibinfo{pages}{2351--2363}.
\newblock


\bibitem[\protect\citeauthoryear{Blum and Spencer}{Blum and Spencer}{1995}]%
        {BS95}
\bibfield{author}{\bibinfo{person}{A. Blum} {and} \bibinfo{person}{J.
  Spencer}.} \bibinfo{year}{1995}\natexlab{}.
\newblock \showarticletitle{Coloring {{Random}} and {{Semi}}-{{Random}}
  $k$-{{Colorable Graphs}}}.
\newblock \bibinfo{journal}{\emph{Journal of Algorithms}} \bibinfo{volume}{19},
  \bibinfo{number}{2} (\bibinfo{year}{1995}), \bibinfo{pages}{204--234}.
\newblock
\showISSN{01966774}


\bibitem[\protect\citeauthoryear{Chen, Sanghavi, and Xu}{Chen
  et~al\mbox{.}}{2012}]%
        {CSX12}
\bibfield{author}{\bibinfo{person}{Y. Chen}, \bibinfo{person}{S. Sanghavi},
  {and} \bibinfo{person}{H. Xu}.} \bibinfo{year}{2012}\natexlab{}.
\newblock \showarticletitle{Clustering Sparse Graphs}. In
  \bibinfo{booktitle}{\emph{Proceedings of the 25th Conference on Neural
  Information Processing Systems}}. \bibinfo{pages}{2204--2212}.
\newblock


\bibitem[\protect\citeauthoryear{Cowan, Gritsch, Nassiri, Sinacore, and
  Veale}{Cowan et~al\mbox{.}}{2017}]%
        {CGN+17}
\bibfield{author}{\bibinfo{person}{N. Cowan}, \bibinfo{person}{H.~A. Gritsch},
  \bibinfo{person}{N. Nassiri}, \bibinfo{person}{J. Sinacore}, {and}
  \bibinfo{person}{J. Veale}.} \bibinfo{year}{2017}\natexlab{}.
\newblock \showarticletitle{Broken Chains and Reneging: {{A Review}} of 1748
  Kidney Paired Donation Transplants}.
\newblock \bibinfo{journal}{\emph{American Journal of Transplantation}}
  \bibinfo{volume}{17}, \bibinfo{number}{9} (\bibinfo{year}{2017}),
  \bibinfo{pages}{2451--2457}.
\newblock


\bibitem[\protect\citeauthoryear{Dickerson, Procaccia, and Sandholm}{Dickerson
  et~al\mbox{.}}{2012}]%
        {DPS12}
\bibfield{author}{\bibinfo{person}{J.~P. Dickerson}, \bibinfo{person}{A.~D.
  Procaccia}, {and} \bibinfo{person}{T. Sandholm}.}
  \bibinfo{year}{2012}\natexlab{}.
\newblock \showarticletitle{Optimizing Kidney Exchange with Transplant Chains:
  Theory and Reality}. In \bibinfo{booktitle}{\emph{Proceedings of the 11th
  {{International Conference}} on {{Autonomous Agents}} and {{Multiagent}}
  Systems}}, Vol.~\bibinfo{volume}{2}. \bibinfo{pages}{711--718}.
\newblock
\showISBNx{978-0-9817381-2-3}


\bibitem[\protect\citeauthoryear{Dickerson and Sandholm}{Dickerson and
  Sandholm}{2017}]%
        {DS17}
\bibfield{author}{\bibinfo{person}{J.~P. Dickerson} {and} \bibinfo{person}{T.
  Sandholm}.} \bibinfo{year}{2017}\natexlab{}.
\newblock \showarticletitle{Multi-organ exchange}.
\newblock \bibinfo{journal}{\emph{Journal of Artificial Intelligence Research}}
   \bibinfo{volume}{60} (\bibinfo{year}{2017}), \bibinfo{pages}{639--679}.
\newblock


\bibitem[\protect\citeauthoryear{Englert, R{\"o}glin, and V{\"o}cking}{Englert
  et~al\mbox{.}}{2014}]%
        {ERV14}
\bibfield{author}{\bibinfo{person}{M. Englert}, \bibinfo{person}{H.
  R{\"o}glin}, {and} \bibinfo{person}{B. V{\"o}cking}.}
  \bibinfo{year}{2014}\natexlab{}.
\newblock \showarticletitle{Worst {{Case}} and {{Probabilistic Analysis}} of
  the 2-{{Opt Algorithm}} for the {{TSP}}}.
\newblock \bibinfo{journal}{\emph{Algorithmica}} \bibinfo{volume}{68},
  \bibinfo{number}{1} (\bibinfo{year}{2014}), \bibinfo{pages}{190--264}.
\newblock
\showISSN{0178-4617, 1432-0541}


\bibitem[\protect\citeauthoryear{Feige}{Feige}{ress}]%
        {Feige20}
\bibfield{author}{\bibinfo{person}{U. Feige}.} \bibinfo{year}{2020 (in
  press)}\natexlab{}.
\newblock \showarticletitle{Introduction to {{Semi}}-{{Random Models}}}.
\newblock In \bibinfo{booktitle}{\emph{Beyond the {{Worst}}-{{Case Analysis}}
  of {{Algorithms}}}}, \bibfield{editor}{\bibinfo{person}{T.~Roughgarden}}
  (Ed.). \bibinfo{publisher}{Cambridge University Press}.
\newblock


\bibitem[\protect\citeauthoryear{Feige and Kilian}{Feige and Kilian}{1998}]%
        {FK98}
\bibfield{author}{\bibinfo{person}{U. Feige} {and} \bibinfo{person}{J.
  Kilian}.} \bibinfo{year}{1998}\natexlab{}.
\newblock \showarticletitle{Heuristics for Finding Large Independent Sets, with
  Applications to Coloring Semi-Random Graphs}. In
  \bibinfo{booktitle}{\emph{Proceedings of the 39th {{Annual IEEE Symposium}}
  on {{Foundations}} of {{Computer Science}}}}. \bibinfo{pages}{674--683}.
\newblock
\showISSN{0272-5428}


\bibitem[\protect\citeauthoryear{Feige and Krauthgamer}{Feige and
  Krauthgamer}{2000}]%
        {FK00}
\bibfield{author}{\bibinfo{person}{U. Feige} {and} \bibinfo{person}{R.
  Krauthgamer}.} \bibinfo{year}{2000}\natexlab{}.
\newblock \showarticletitle{Finding and Certifying a Large Hidden Clique in a
  Semirandom Graph}.
\newblock \bibinfo{journal}{\emph{Random Structures \& Algorithms}}
  \bibinfo{volume}{16}, \bibinfo{number}{2} (\bibinfo{year}{2000}),
  \bibinfo{pages}{195--208}.
\newblock
\showISSN{1098-2418}


\bibitem[\protect\citeauthoryear{Ferrari, Hughes, Cohney, Woodroffe, Fidler,
  and D’Orsogna}{Ferrari et~al\mbox{.}}{2013}]%
        {FHC+13}
\bibfield{author}{\bibinfo{person}{P. Ferrari}, \bibinfo{person}{P.~D. Hughes},
  \bibinfo{person}{S.~J. Cohney}, \bibinfo{person}{C. Woodroffe},
  \bibinfo{person}{S. Fidler}, {and} \bibinfo{person}{L. D’Orsogna}.}
  \bibinfo{year}{2013}\natexlab{}.
\newblock \showarticletitle{{ABO}-{Incompatible} {Matching} {Significantly}
  {Enhances} {Transplant} {Rates} in {Kidney} {Paired} {Donation}}.
\newblock \bibinfo{journal}{\emph{Transplantation}} \bibinfo{volume}{96},
  \bibinfo{number}{9} (\bibinfo{year}{2013}), \bibinfo{pages}{821--826}.
\newblock
\showISSN{0041-1337}


\bibitem[\protect\citeauthoryear{Gupta and Singla}{Gupta and Singla}{ress}]%
        {GS20}
\bibfield{author}{\bibinfo{person}{A. Gupta} {and} \bibinfo{person}{S.
  Singla}.} \bibinfo{year}{2020 (in press)}\natexlab{}.
\newblock \showarticletitle{Random-Order Models}.
\newblock In \bibinfo{booktitle}{\emph{Beyond the {{Worst}}-{{Case Analysis}}
  of {{Algorithms}}}}, \bibfield{editor}{\bibinfo{person}{T.~Roughgarden}}
  (Ed.). \bibinfo{publisher}{Cambridge University Press}.
\newblock


\bibitem[\protect\citeauthoryear{Hajaj, Dickerson, Hassidim, Sandholm, and
  Sarne}{Hajaj et~al\mbox{.}}{2015}]%
        {HDH+15}
\bibfield{author}{\bibinfo{person}{C. Hajaj}, \bibinfo{person}{J.~P.
  Dickerson}, \bibinfo{person}{A. Hassidim}, \bibinfo{person}{T. Sandholm},
  {and} \bibinfo{person}{D. Sarne}.} \bibinfo{year}{2015}\natexlab{}.
\newblock \showarticletitle{Strategy-Proof and Efficient Kidney Exchange Using
  a Credit Mechanism}. In \bibinfo{booktitle}{\emph{Proceedings of the 29th
  {{AAAI Conference}} on {{Artificial Intelligence}}}}.
\newblock


\bibitem[\protect\citeauthoryear{Karande, Mehta, and Tripathi}{Karande
  et~al\mbox{.}}{2011}]%
        {KMT11}
\bibfield{author}{\bibinfo{person}{C. Karande}, \bibinfo{person}{A. Mehta},
  {and} \bibinfo{person}{P. Tripathi}.} \bibinfo{year}{2011}\natexlab{}.
\newblock \showarticletitle{Online Bipartite Matching with Unknown
  Distributions}. In \bibinfo{booktitle}{\emph{Proceedings of the 43rd Annual
  {{ACM}} Symposium on {{Theory}} of Computing}}. \bibinfo{pages}{587--596}.
\newblock


\bibitem[\protect\citeauthoryear{Kenyon}{Kenyon}{1996}]%
        {Kenyon96}
\bibfield{author}{\bibinfo{person}{C. Kenyon}.}
  \bibinfo{year}{1996}\natexlab{}.
\newblock \showarticletitle{Best-Fit Bin-Packing with Random Order}. In
  \bibinfo{booktitle}{\emph{Proceedings of the 7th Annual {{ACM}}-{{SIAM}}
  Symposium on {{Discrete}} Algorithms}}. \bibinfo{pages}{359--364}.
\newblock
\showISBNx{978-0-89871-366-4}


\bibitem[\protect\citeauthoryear{Kolla, Makarychev, and Makarychev}{Kolla
  et~al\mbox{.}}{2011}]%
        {KMM11}
\bibfield{author}{\bibinfo{person}{A. Kolla}, \bibinfo{person}{K. Makarychev},
  {and} \bibinfo{person}{Y. Makarychev}.} \bibinfo{year}{2011}\natexlab{}.
\newblock \showarticletitle{How to Play Unique Games against a Semi-Random
  Adversary: {{Study}} of Semi-Random Models of Unique Games}. In
  \bibinfo{booktitle}{\emph{Proceeding of the 52nd {{Annual IEEE Symposium}} on
  {{Foundations}} of {{Computer Science}}}}. \bibinfo{pages}{443--452}.
\newblock


\bibitem[\protect\citeauthoryear{Makarychev, Makarychev, and
  Vijayaraghavan}{Makarychev et~al\mbox{.}}{2012}]%
        {MMV12}
\bibfield{author}{\bibinfo{person}{K. Makarychev}, \bibinfo{person}{Y.
  Makarychev}, {and} \bibinfo{person}{A. Vijayaraghavan}.}
  \bibinfo{year}{2012}\natexlab{}.
\newblock \showarticletitle{Approximation Algorithms for Semi-Random
  Partitioning Problems}. In \bibinfo{booktitle}{\emph{Proceedings of the 44th
  Annual ACM Symposium on Theory of Computing}}. \bibinfo{pages}{367--384}.
\newblock


\bibitem[\protect\citeauthoryear{McCormick, Held, and Chertow}{McCormick
  et~al\mbox{.}}{2018}]%
        {MHC18}
\bibfield{author}{\bibinfo{person}{F. McCormick}, \bibinfo{person}{P.~J. Held},
  {and} \bibinfo{person}{G.~M. Chertow}.} \bibinfo{year}{2018}\natexlab{}.
\newblock \showarticletitle{The Terrible Toll of the Kidney Shortage}.
\newblock \bibinfo{journal}{\emph{Journal of the American Society of
  Nephrology}} \bibinfo{volume}{29}, \bibinfo{number}{12}
  (\bibinfo{year}{2018}), \bibinfo{pages}{2775--2776}.
\newblock


\bibitem[\protect\citeauthoryear{Spielman and Teng}{Spielman and Teng}{2004}]%
        {ST04}
\bibfield{author}{\bibinfo{person}{D.~A. Spielman} {and} \bibinfo{person}{S.-H.
  Teng}.} \bibinfo{year}{2004}\natexlab{}.
\newblock \showarticletitle{Smoothed Analysis of Algorithms: {{Why}} the
  Simplex Algorithm Usually Takes Polynomial Time}.
\newblock \bibinfo{journal}{\emph{J. ACM}} \bibinfo{volume}{51},
  \bibinfo{number}{3} (\bibinfo{year}{2004}), \bibinfo{pages}{385--463}.
\newblock


\bibitem[\protect\citeauthoryear{Toulis and Parkes}{Toulis and Parkes}{2015}]%
        {TP15}
\bibfield{author}{\bibinfo{person}{P. Toulis} {and} \bibinfo{person}{D.~C.
  Parkes}.} \bibinfo{year}{2015}\natexlab{}.
\newblock \showarticletitle{Design and Analysis of Multi-Hospital Kidney
  Exchange Mechanisms Using Random Graphs}.
\newblock \bibinfo{journal}{\emph{Games and Economic Behavior}}
  \bibinfo{volume}{91} (\bibinfo{year}{2015}), \bibinfo{pages}{360--382}.
\newblock


\end{thebibliography}

\newpage
\appendix
\section*{\LARGE Appendix}

\section{Lower Bounds for Incentive Compatibility}
\label{app:lower}
\begin{figure}[htb]
\centering
  \centering
    \includegraphics[width=.4\textwidth]{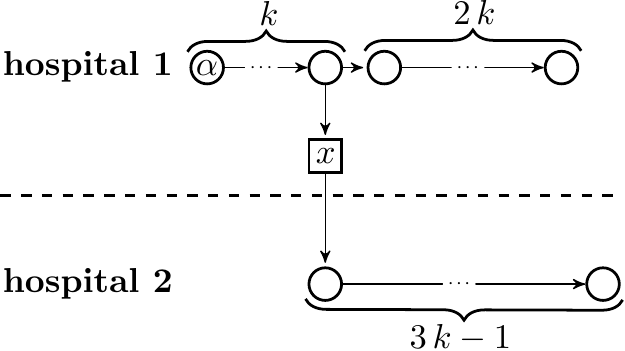}
    \caption{Without random edges, incentive compatible mechanisms cannot compete with our suboptimal benchmarks.}
    \label{fig:icworstcase}
\end{figure}
\begin{figure}[htb]
  \centering
    \includegraphics[width=.6\textwidth]{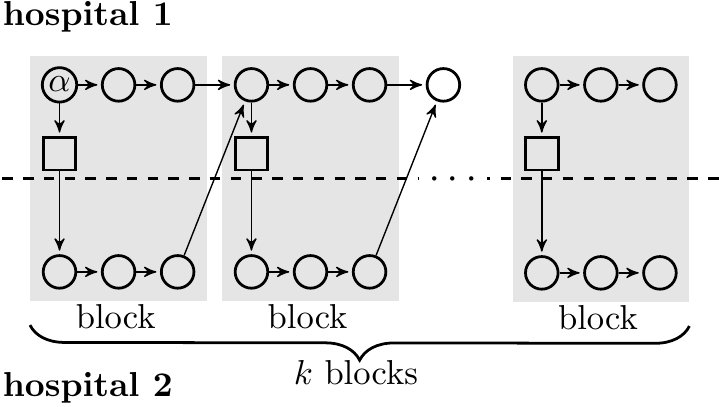}
    \caption{Even with random edges, incentive compatible mechanisms cannot compete with $\opt$.}
    \label{fig:icnoopt}
\end{figure}

\subsection{Worst-case graphs}
\label{sec:icworstcase}
\begin{theorem}
    \label{thm:icworstcase}
    Let $p=0$, i.e., suppose that no random edges are added to the worst-case base graph.
    For each deterministic mechanism, there exists a family of instances such that, on every instance, the mechanism will obtain $\Theta(n)$ less welfare than $\sopt^{n/6}$ or give some hospital $\Theta(n)$ incentive for hiding nodes.
    
    (Note that, if the mechanism falls short of $\sopt^{n/6}$, it also falls short of the stronger benchmarks $\sopt^s$ and $\avgopt^s$ for any $s \leq n/6$.)
\end{theorem}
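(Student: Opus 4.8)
The plan is to follow the template of the proof of \cref{thm:irworstcase}, but to replace the ``match internally outside the mechanism'' option by a node-hiding manipulation. Fix a deterministic mechanism. For each $k$ I would build a ``diamond'' base graph on $n=6k$ nodes with two hospitals, the altruist owner $a$ and a second hospital $b$ (cf.\ \cref{fig:icworstcase}), arranged so that from the altruist $\alpha$ there are exactly two maximal paths. The \emph{long} path $\pi_X$, of length $4k$, has three segments: an $a$-internal path of length $k$ starting at $\alpha$, then a $b$-internal path $\beta$ of length $k$, then an $a$-internal path of length $2k$. The \emph{short} path $\pi_Y$, of length $3k$, reuses the initial $a$-segment of length $k$ and then follows a $b$-internal path $L_b$ of length $2k$ that is disjoint from $\beta$. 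The crucial structural features are that the last $a$-segment of $\pi_X$ is reachable from $\alpha$ \emph{only} through $\beta$, so $\pi_X$ uses just the $k$ nodes of $\beta$ out of $V_b$, whereas $\pi_Y$ uses all $2k$ nodes of $L_b$. Every segment of either path has length at least $k=n/6$, so $\sopt^{n/6}$ (indeed $\opt$) equals $\pi_X$, of length $4k$; the parenthetical statement about $\sopt^s,\avgopt^s$ for $s\le n/6$ then follows from monotonicity of the benchmarks in $s$.

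Next, fixing a small constant such as $\delta=1/8$, I would case on the mechanism's behavior. If its welfare on this instance is below $(1-\delta)\,|\sopt^{n/6}|=(1-\delta)4k$, the first disjunct holds with a $\Theta(n)$ gap and the instance enters the family. Otherwise the returned path has length more than $3k$, hence is a prefix of $\pi_X$ containing all of $\beta$ and, being disjoint from $L_b$, none of it; so $b$'s truthful utility is exactly $k$. Now consider $b$ hiding all of $\beta$: the mechanism then sees precisely the instance $\hat J_k$ in which $b$ truly owns only $L_b$ (so $n'=5k$), whose unique maximal path from $\alpha$ is the length-$3k$ path through $L_b$, still with every segment of length at least $n'/6$, so $|\sopt^{n'/6}(\hat J_k)|=3k$. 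If the mechanism's welfare on $\hat J_k$ is below $(1-\delta)3k$, I place $\hat J_k$ into the family instead, where it witnesses the first disjunct with a $\Theta(n')=\Theta(n)$ gap. Otherwise the mechanism returns on $\hat J_k$ a prefix of that path of length at least $(1-\delta)3k$, which contains at least $(2-3\delta)k$ nodes of $L_b$; so by hiding $\beta$ hospital $b$ raises its utility from $k$ to at least $(2-3\delta)k$, an additive gain of $(1-3\delta)k=\Theta(n)$, establishing the second disjunct on the original instance. In every case the chosen instance satisfies one disjunct or the other, and instance sizes grow without bound, which is the theorem.

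I expect the gadget design, rather than the case analysis, to be the main obstacle: the two $\alpha$-paths must simultaneously (i) have all segments of length at least $n/6$, so that $\sopt^{n/6}$ is the \emph{long}, $b$-poor path (which also yields the parenthetical); (ii) be arranged so that \emph{every} near-optimal path is $b$-poor while a \emph{single} hiding operation destroys all near-optimal paths and leaves only a $b$-rich maximal path; and (iii) yield a post-manipulation instance that is still well-behaved (benchmark $\Theta(n)$, essentially unique maximal path) so that efficiency there can be leveraged. A secondary subtlety, absent from the individual-rationality argument because ``matching outside'' is mechanism-independent whereas hiding is not, is that one cannot force the mechanism's output on the post-manipulation graph; hence the family has to be chosen depending on the mechanism, substituting $\hat J_k$ for the original instance exactly when the mechanism degrades after the manipulation.
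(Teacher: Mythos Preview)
Your proposal is correct and follows essentially the same strategy as the paper's proof: build a two-branch gadget in which the benchmark path is poor for one hospital, then case on whether the mechanism is near-optimal on the full instance and on the post-manipulation instance, adaptively adding whichever instance witnesses a $\Theta(n)$ gap to the family. The one noteworthy difference is the gadget itself: the paper makes the \emph{altruist owner} the manipulator, who hides a single critical node $x$ to force the mechanism onto her own $3k$-long internal path, whereas you make the \emph{non-altruist} hospital $b$ the manipulator, hiding the entire $k$-node bridge segment $\beta$ to force the mechanism onto the $b$-rich branch $L_b$. Both instantiations work; the paper's version has the minor aesthetic advantage that only one node is hidden, while yours keeps the manipulation cleanly separated from the altruist and makes the ``every near-optimal path is $b$-poor'' step slightly more transparent.
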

\begin{proof}
For any $k \in \mathbb{N}_{\geq 1}$, consider the instances given in \cref{fig:icworstcase}.
Depending on the behavior of the mechanism, we will add either this instance or a slightly modified instance to the family of instances.
We will always add a graph of size in $\{6 \, k, 6 \, k - 1\}$, which means that $n \in \Theta(k)$.
Note that $\sopt^{n/6}$ selects the path of length $4 \, k$ on the full graph.

Under truthful reports, if the mechanism selects less than $2 \, k$ of hospital~2's nodes, it achieves a welfare of at most $3 \, k$ rather than the welfare of $4 \, k$ obtained by the benchmark.
Thus, the mechanism obtains a welfare that is at least $k = n/6$ lower than the one obtained by the benchmark, and we can add the full instance to the family.

Else, the mechanism selects at least $2 \, k$ nodes from hospital~2 under truthful reporting, which means that hospital~1 receives utility~$k + 1$.
Now consider the same graph with node~$x$ removed, on which $\sopt^{n/6}$ will obtain a welfare of $3 \, k$, the length of the $a$-internal path.
If the mechanism takes at most $2 \, k$ nodes when presented with this graph, its welfare is below that of the benchmark by at least $k = (n-1)/6$, and we can add the instance without $x$ to the family as above.
Else, hospital~1 receives at least $2 \, k + 1$ utility on the smaller graph, and at most $k + 1$ utility on the full graph.
Thus, hospital~1 has an incentive of $k = n/6$ to hide node $x$ when reporting its subgraph.
Thus, we add the full graph to the family.
\end{proof}

\subsection{Competing with Optimal Welfare in the Semi-Random Model}
\label{sec:icnoopt}
\begin{theorem}
    \label{thm:icnoopt}
    Fix any $p \in o(1/n)$ and any deterministic mechanism.
    Then, there exists a family of instances such that, with $\Omega(1)$ probability, the mechanism will on any instance either
    obtain $\Theta(n)$ less welfare than $\opt$ or give some hospital $\Theta(n)$ incentive for hiding nodes.
\end{theorem}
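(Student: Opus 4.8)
The plan is to combine the two gadgets we have already constructed: the block structure of \cref{fig:noopt}, which forces any welfare-competitive mechanism to make many irreversible small-scale choices, and the critical-node gadget of \cref{fig:icworstcase}, which turns such a forced choice into a profitable opportunity to hide a node. Concretely, the instances in \cref{fig:icnoopt} should consist of a sequence of $k$ blocks as in \cref{thm:irnoopt} (each block offering an ``upper branch'' through several nodes of hospital~$1$ versus a ``lower branch'' through more nodes of hospital~$2$), followed by two long internal segments, together with one distinguished node $x \in V_1$ lying on the only bridge connecting the altruist's component to the high-welfare continuation outside hospital~$1$. The base graph will have size in $\Theta(k)$, so $n \in \Theta(k)$, which gives the $\Theta(n)$ gaps claimed.

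First I would reduce the randomness to a deterministic statement exactly as in \cref{thm:irnoopt}: with probability at least $1/2$ the number of random edges is at most $p\,n^2+1 \in o(n)$, so at most $o(n)$ blocks receive an incoming random edge; condition on this event throughout. Then I would split on the mechanism's behavior under truthful reports. \textbf{Case 1: the mechanism matches few of hospital~$1$'s block nodes with probability $\ge 1/2$.} Here I would run the counting argument of \cref{thm:irnoopt} in reverse on the full instance: matching few of hospital~$1$'s nodes is, up to the $o(n)$ blocks touched by random edges, incompatible with welfare within $\Theta(n)$ of $|\opt|$, so with total probability $\ge 1/4$ the mechanism already falls $\Theta(n)$ short of $\opt$ on the full instance, and we add the full instance to the family.

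\textbf{Case 2: the mechanism matches many of hospital~$1$'s block nodes with probability $>1/2$.} Condition additionally on this event and on the random-edge bound, and consider the modified instance in which hospital~$1$ hides $x$ (and, if the gadget requires it, a constant number of further nodes of hospital~$1$). On this modified instance the bridge to the external high-welfare continuation is cut, so $|\opt|$ drops by $\Theta(n)$ and any path with welfare within $o(n)$ of the new optimum must stay essentially inside hospital~$1$, giving hospital~$1$ $\Theta(n)$ more utility than it obtained in the full truthful run. Now sub-split on the mechanism's behavior on the modified instance: either it attains welfare $\Theta(n)$ below the (smaller) $\opt$ there, in which case we add the modified instance to the family by invoking the block-counting argument once more on the modified graph; or it attains near-optimal welfare there and hence matches $\Theta(n)$ of hospital~$1$'s nodes, in which case hospital~$1$ has a $\Theta(n)$ incentive to hide $x$ and we add the full instance to the family. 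Each conditioning costs only a constant factor, so the conclusion holds with $\Omega(1)$ probability.

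The main obstacle I expect is engineering the gadget so that all of these implications hold simultaneously on both the full and the $x$-hidden instance: the block-counting argument must be robust to the $o(n)$ blocks touched by random edges in \emph{both} graphs; removing $x$ must genuinely collapse the benchmark, i.e., the external branch must be unreachable without $x$ even using random cross edges (this is exactly where $p \in o(1/n)$ is used, since with high probability the relevant $O(1)$ endpoints have no incident random edges); and ``few vs.\ many of hospital~$1$'s block nodes'' must be the \emph{same} dichotomy that is both forced by near-optimal welfare on each graph and a faithful measure of hospital~$1$'s utility. Once the gadget is fixed, reconciling the constants — the analogue of the $\tfrac{11}{4}k$ versus $\tfrac{23}{4}k$ bookkeeping in \cref{thm:irnoopt} — is routine.
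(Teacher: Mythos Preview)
Your high-level plan --- repeat the block structure of \cref{thm:irnoopt} and show that hospital~$1$ can profitably hide nodes --- is the right instinct, and it is what the paper does. The gap is the single critical node~$x$. You assert that after removing $x$ ``the external branch must be unreachable \dots since with high probability the relevant $O(1)$ endpoints have no incident random edges.'' But the set of relevant endpoints is not $O(1)$: \emph{every} vertex of the $\Theta(n)$-sized continuation is a potential target of a random cross edge, and the expected number of random edges from hospital~$1$'s $\Theta(n)$ remaining vertices into it is $\Theta(n^2 p)$, which for $p$ as large as, say, $n^{-1-\epsilon}$ is $n^{1-\epsilon}\to\infty$. So with high probability the continuation \emph{is} reachable after hiding $x$, and a mechanism can stitch hospital~$1$'s long content in front of it --- exactly the phenomenon our positive results exploit --- obtaining simultaneously high welfare and high hospital~$1$ utility. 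On the full instance the same random edges are visible, so a clever mechanism can take the same stitched route there as well; your dichotomy between ``the full instance forces low hospital~$1$ utility'' and ``the hidden instance forces high hospital~$1$ utility'' collapses.

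The paper's construction is engineered precisely to kill this stitching. Hospital~$2$'s vertices are not one long continuation but $k$ \emph{constant-size} chunks, one per block, each reached from hospital~$1$'s chain only through a dedicated ``square'' node owned by hospital~$1$. The manipulation is to hide all $k\in\Theta(n)$ square nodes at once, not a single bridge. Then every chunk is isolated, and each of the $o(n)$ random edges can reconnect only $O(1)$ vertices of hospital~$2$, so at most $o(n)$ of hospital~$2$ becomes reachable on the reduced instance; near-optimal welfare there must come almost entirely from hospital~$1$. On the full instance the block-counting argument (now run via expected welfare and Markov rather than a direct case split on hospital~$1$'s utility) shows that near-optimal welfare forces the hospital~$2$ branch in most blocks and hence low hospital~$1$ utility. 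A smaller issue: with blocks as you describe them (lower branch through \emph{more} nodes, as in \cref{thm:irnoopt}), ``few hospital~$1$ block nodes'' is the \emph{high}-welfare choice, so the implication in your Case~1 points the wrong way unless you also invert the gadget.
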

\begin{proof}
    For any $k$, consider the base graph in \cref{fig:icnoopt} with $6 \, k$ nodes.
    We will always construct a graph of size either $6 \, k$ or $5 \, k$ nodes, so $p \in o(1/k)$.
    Let $\epsilon > 0$ be a sufficiently small constant, which can be set to $1/20$.

    In each of the random graphs, the maximum welfare is at least $4 \, k$, %
    which can be obtained by not taking any random edges, and branching into hospital~2's subgraph whenever possible.
    Suppose that, over the randomness in the random edges, the mechanism has in expectation welfare below $(4 - 2 \, \epsilon) \, k$.
    Then, by Markov's inequality, the probability that welfare is above $(4 - \epsilon) \, k$ is at most $(4 - 2 \, \epsilon)/(4 - \epsilon) < 1$.
    We add this base graph to the family, and know that, with constant probability, the mechanism will achieve $\epsilon \, k \in \Theta(k) = \Theta(n)$ less welfare than the optimal path.
    
    Now consider the same base graph where all the square nodes have been removed.
    Note that this reduces $n$ by a constant factor and might change $p$, but that this $p'$ is still in $o(1/k)$.
    Since there are at most $(5 \, k)^2$ potential edges, each existing independently with probability $p'$, the distribution of $|E_p|$ is dominated by a binomial variable with $(5 \, k)^2$ trials and success probability $p'$.
    By a Chernoff bound, the probability that this binomial variable is larger than $(5 \, k)^2 \, p' + \sqrt{5 \, k} \in o(k)$ is at most $e^{-\sqrt{5 \, k}/3}$.
    Since the square nodes are removed, none of hospital~2's vertices would be reachable from the altruist without random edges, and every random edge can make at most $2$ additional vertices reachable.
    Suppose the expected welfare achieved by the mechanism on the reduced base graph is below $(3 - 2 \, \epsilon) \, k$.
    Then, by Markov, the mechanism can achieve welfare above $(3 - \epsilon) \, k$ with probability at most $(3 - 2 \, \epsilon)/(3 - \epsilon)$; with probability at least $\epsilon \, k \in \Theta(n)$, it wastes at least $\epsilon \, k$ welfare with respect to the maximal path of length at least $3 \, k$.
    Thus, in this case, the reduced base graph can be added to the family to satisfy the claim.
    Otherwise, if the expected welfare is above $(3 - 2 \, \epsilon) \, k$, at most $e^{-\sqrt{5 \, k}/3} \, 6 \, k + o(k) \, 2 \in o(k)$ of that expectation comes from nodes of hospital~2 and from any nodes in blocks with random edges.
    Thus, the expected utility for hospital~1 from non-random blocks is at least $(3 - 3 \, \epsilon) \, k$ for large enough $k$.
    Because the random variable describing this utility can never be larger than $3 \, k$ (i.e., hospital~1's total number of nodes), this utility must be larger than $8/3 \, k$ with probability at least $((3 - 3 \, \epsilon) \, k - 8/3 \, k)/(3 \, k - 8/3 \, k) = 1 - 9 \, \epsilon$.

    Return to the full base graph, knowing that the expected welfare on the whole base graph is least $(4 - 2 \, \epsilon) \, k$, and that hospital~1's utility on the reduced base graph is at least $8/3$ with probability at least $1 - 9 \, \epsilon$.
    Again, a Chernoff bound shows that there must be $o(k)$ random edges with probability at least $1 - e^{-\sqrt{6 \, k}/3}$.
    For each of block with random edges, its contribution to welfare is at most $6$, and its contribution to hospital~1's utility is at most $4$.
    All the other blocks can only be entered once, meaning that they can either provide $4$ welfare and $2$ utility for hospital~1, or $3$ welfare and $3$ utility for hospital~1.
    Let $\alpha$ be the expected fraction of blocks that do not have random edges and in which the mechanism selects the option with higher welfare.
    Then, expected welfare is at most
    \[ e^{-\sqrt{6 \, k}/3} \, 6 \, k + \alpha \, 4 \, k  + (1 - \alpha) \, 3 \, k. \]
    Since this bound also must be at least $(4 - 2 \, \epsilon) \, k$, it follows by arithmetic that $\alpha \geq 1 - 2 \, \epsilon - e^{-\sqrt{6 \, k}/3} \, 6$, which is at least $1 - 3 \, \epsilon$ for large enough $k$.
    Over the sampling of the random edges, the probability that at least two thirds of the blocks do not have random edges and are assigned the high-welfare option by the mechanism is at least $(1 - 3 \, \epsilon - 2/3)/(1 - 2/3) = 1 - 9 \, \epsilon$.
    By union bound, with probability at least $1 - 10 \, \epsilon$, it additionally holds that there are $o(k)$ random edges.
    Conditioned on these events, hospital~1's utility is at most
    \[ o(k) \, 3 + \alpha \, 2 \, k + (1 - \alpha) \, 3 \, k \leq \big(7/3 + o(1)\big) \, k.\]
    Using another union bound, with probability at least $1 - 19 \, \epsilon$, hospital~1 would receive at least $8/3 \, k$ utility by hiding all the square nodes.
    Thus, we can add the full instance to our family since, with constant probability, there is $\big(1/3 - o(1)\big) \, k \in \Theta(n)$ incentive for manipulation.
\end{proof}

\section{Proof of Lemma 3}
\label{sec:lemballs}
\lemballs*
\begin{proof}
    Imagine having $n_1$ many stacks.
    Cycle through the stacks in order and always place one ball on the current stack, going through the balls from the most prevalent color to the least prevalent one.
    In the first pass through the stacks, we place one ball of color 1 at the bottom of each stack.
    In the second pass, each stack receives a second ball because at most half of the balls have color~1.
    Note that no stack will contain two balls of the same color because no color has more than $n_1$ balls, and, thus, every color will run out before reaching the same stack a second time.
    This means that, if we append all final stacks into a cyclical list, no two balls of the same color touch.
\end{proof}

\section{Proof of Theorem 8: Analysis of Mechanism 2}
\label{app:mech2proofs}
Before showing that \cref{alg:avg} succeeds with high probability~(\cref{thm:avgsucceeds}), has high welfare~(\cref{thm:avgwelfare}), and is matching-time incentive compatible~(\cref{thm:avgic}), we start with a sequence of general observations.
These observations establish that the mechanism is well-defined and will be used in the proofs of the later lemmas.

As before, we consider $s'$ and $\log n$ as integers, ignoring rounding complications that do not influence asymptotic behavior.
We assume the input to be ``large enough'' in terms of $n$ and $s$; formally, there should be a check at the beginning of the mechanism that fails if $n$ is too low and returns $\irpath$ in this case.
Since, under truthful responses, this failure will only happen for small graphs, this does not stand in the way of the mechanism succeeding with high probability.
While hospitals could manipulate to get to this case, this is disadvantageous with an argument akin to the ones made in the proof of \cref{thm:avgic}.

Hospitals~$i$ are called \emph{active} if their $\paths_i$ is nonzero, i.e., whenever they have some internal path of length $s$.
From \cref{lin:avg2:7} on, we know that there are at least two active hospitals, one of which must be $a$.
Each active hospital $i$ must have $\paths_i \geq \log n$ at this point.
Thus, \cref{lin:avg2:8} does not make additional hospitals inactive, and it remains true that $\paths_i \geq \log n$ for all active hospitals.

In \cref{lin:avg2:10}, it must hold that the total length of each $\Paths_i$ is at least $(s/\log n) \, \paths_i$.
Indeed, consider the original values of $\paths_i$ before modification in \cref{lin:avg2:8}, which we call $\opaths_i$ here.
If $\paths_i \geq \opaths_i / \log n$, by \cref{lin:avg2:3}, there are $\opaths_i / \log n$ paths with average length $s$, which satisfy the conditions on $\Paths_i$; the total length of $\Paths_i$ can thus not be lower than $\opaths_i \, s/\log n \geq \paths_i \, s/\log n$.
Else, by choosing the $\paths_i$ longest paths from \cref{lin:avg2:3}, we obtain an average length of at least $s$, and a total length of even at least $s \, \paths_i$.

The mechanism then removes very short paths and creates additional paths long enough for stitching.
In the removal of short paths in \cref{lin:avg2:remshort}, the total length of the corresponding $\Paths_i$ might decrease by no more than $4 \, s' \, \paths_i$.
The following loop in \cref{lin:avg2:addshort1,lin:avg2:addshort2,lin:avg2:addshort} only redistributes the total length within each $\Paths_i$ without changing it.
Thus, the total length must be at least $\paths_i \, (s/\log n - 4 \, s')$ in any iteration.
Therefore, the longest path in $\Paths_i$ must have length at least $s/\log n - 4 \, s' = (1 - 4 / \log n) \, s / \log n$, which for large $n$ is much larger than $8 \, s' = (8 / \log n) \, s / \log n$.
It follows that, after the for loop has finished in \cref{lin:avg2:14}, each $\Paths_i$ has exactly $\paths_i$ disjoint paths, and that all of them have length at least $4 \, s'$.
Note that, because we adjusted $\paths_j$ in \cref{lin:avg2:8}, \cref{lem:balls} allows us to arrange all paths in a cyclic sequence such that no two adjacent paths belong to the same hospital.

In the do-while loop in \cref{lin:avg2:15,lin:avg2:16,lin:avg2:17,lin:avg2:18,lin:avg2:19,lin:avg2:20,lin:avg2:21,lin:avg2:22,lin:avg2:23,lin:avg2:24}, the mechanism performs a graph search from the altruist among the vertices of active hospitals, attempting to find a path from the altruist to one of the paths in the $\Paths_i$.
$\Explored$ is a set of explored nodes, and $\nxt$ refers to the next node to be explored.
The loop must eventually terminate because the cardinality of $\Explored$ increases in every iteration and because the loop terminates once this cardinality exceeds $s'$.
Throughout this graph search, it is maintained that the explored nodes are disjoint from all $\Paths_i$ by removing short suffixes off these paths.
Furthermore, the size of $\Explored$ can never exceed $2 \, s'$, which implies that we never reduce the length of any path below $4 \, s' - 2 \, s' = 2 \, s'$.

When a new node $\nxt$ is explored from \cref{lin:avg2:16} on, we check whether this node lies on one of the paths in the $\Paths_i$.
If it is, we check in \cref{lin:avg2:20} whether this node is at least $s'$ steps from the end of the path.
If it is not, we remove the suffix from the path and add the suffix to the explored nodes to preserve the disjointness between paths and explored nodes.
If it is far enough from the end, we attempt to stitch in \hyperref[alg:stitch1]{subprocedure \texttt{stitch1()}}.
By \cref{lem:balls}, we can find a cyclical hospital-alternating sequence of the paths in $\bigcup_i \Paths_i$, which makes the definition in \cref{lin:stitch1:1b} of \texttt{stitch1()} valid.

Note that the last $s'$ nodes of $\sigma'$ come from $\sigma$ and belong to the same hospital as $\pi$, thus not to the hospital owning $\pi_2$.
Furthermore, all later paths in the sequences defined in \cref{lin:stitch1:3,lin:stitch1:5} have minimum length $2 \, s'$.
Thus, it makes sense to search for a stitching; if successful, the stitching will be a single path starting at the altruist.
The mechanism might also fail if the stitching is unsuccessful, which we will show only happens with low probability.

Back in the main body of \cref{alg:avg}, if do-while loop terminates without returning, this might be either because at least $s'$ nodes have successfully be explored, or because less than $s'$ nodes were reachable among the active hospitals.
In the latter case, the mechanism returns in \cref{lin:avg2:26}.
Else, i.e., if at $s'$ have been explored, we check in \cref{lin:avg2:27} if some neighbor $v'$ of an explored node $v$ is at the start of some path $\pi \in \bigcup_{i} \Paths_i$.
If so, \hyperref[alg:stitch2]{subprocedure \texttt{stitch2()}} attempts to stitch $\bigcup_{i} \Paths_i$ with $\pi$ at the front.
Again, by \cref{lem:balls}, the paths can be arranged in a sequence starting at $\pi$ such that path ownership alternates.
Furthermore, every path has minimum length $2 \, s'$.
The edge from $v$ to the beginning of the path forms a stitch from a path out of the altruist to $\pi$, which we can add to the start of the stitched path if the rest of the stitching is successful.
If no such neighbor can be found, the mechanism fails in \cref{lin:avg2:30}.

\begin{lemma}
\label{thm:avgsucceeds}
    If the parameters $p$ and $s$ vary in $n$ such that $p \, s^2 \in \omega\big(\log (n)^5\big)$, \cref{alg:avg} succeeds with high probability under truthful reports.
\end{lemma}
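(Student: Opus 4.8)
The plan is to mirror the structure of the proof of \cref{thm:s1succeeds}, but accounting for the additional failure modes introduced by the graph search and the two stitching subprocedures. The mechanism can fail in three places: (i) in the graph search, if \texttt{stitch1()} is called and its stitching step fails; (ii) after the graph search terminates with $|\Explored| \geq s'$, if \texttt{stitch2()} is called and its stitching fails; and (iii) in \cref{lin:avg2:30}, if the search explored at least $s'$ nodes but no explored node has a neighbor in the first $s'$ nodes of any path in $\bigcup_i \Paths_i$. (The returns in \cref{lin:avg2:6} and \cref{lin:avg2:26} are deterministic successes, so they pose no problem.) I would bound the probability of each of these, conditioning throughout on truthful reports and using only the random cross edges (adding deterministic edges can only help).

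First I would handle the stitching failures (i) and (ii). In both \texttt{stitch1()} and \texttt{stitch2()}, by the observations preceding this lemma, every path involved in a stitch has length at least $2\,s'$, and the stitching looks for an edge among the last $s'$ nodes of one path and the first $s'$ nodes of the next; since these two windows of size $s'$ are disjoint within any single path of length $\geq 2\,s'$, there are $(s')^2$ distinct candidate random edges for each individual stitch, each present independently with probability $p$. Hence each individual stitch fails with probability at most $(1-p)^{(s')^2}$. There are at most $n$ stitches in total, so by a union bound the probability that \emph{any} stitch (in whichever subprocedure is invoked) fails is at most
\[
  n\,(1-p)^{(s')^2} \;\leq\; \exp\!\big(\log n - p\,(s')^2\big) \;=\; \exp\!\big(\log n - p\,s^2/\log(n)^4\big) \;\to\; 0,
\]
exactly as in \cref{thm:s1succeeds}, using $s' = s/\log(n)^2$ and the hypothesis $p\,s^2 \in \omega(\log(n)^5)$.

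The remaining case (iii) is the one I expect to be the main obstacle, since it is genuinely new relative to \cref{alg:s1}. Here the graph search terminated because it accumulated $\Explored$ with $|\Explored| \geq s'$ without ever hitting a node far enough from the end of a path to trigger \texttt{stitch1()}; I must argue that, with high probability, some node of $\Explored$ nonetheless has a random edge into the first $s'$ nodes of some path in $\bigcup_i \Paths_i$. Since there are at least two active hospitals, and $\Explored$ lies among active-hospital vertices, at least one active hospital $i$ satisfies $|\Explored \cap V_i| \geq s'/2$ (or we can split $\Explored$ across the (at least two) active hospitals so that one block has size $\geq s'/2$); meanwhile, among the $\paths_i \geq \log n$ paths belonging to a \emph{different} active hospital, at least one contains a full window of $s'$ leading nodes. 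Actually, more carefully: every path in $\bigcup_i \Paths_i$ has length $\geq 4\,s' - 2\,s' = 2\,s'$ (after the search's suffix removals), so each has at least $s'$ nodes in its prefix window, and there is at least one such path owned by a hospital different from the one holding the bulk of $\Explored$. This yields at least $(s'/2)\cdot s' = (s')^2/2$ candidate random cross edges from $\Explored$ into a prefix window, all independent, so the probability that none exists is at most $(1-p)^{(s')^2/2} \leq \exp\!\big(-p\,s^2/(2\log(n)^4)\big)$, which tends to $0$ strictly faster than any polynomial decays — more than enough to union-bound away. (A subtlety to check: the explored nodes are kept disjoint from all $\Paths_i$ by the suffix-removal maintenance, and the suffixes removed are short — so the prefix windows of the relevant paths are untouched and still have $s'$ nodes; I should verify this bookkeeping against \cref{lin:avg2:17,lin:avg2:18,lin:avg2:19,lin:avg2:20,lin:avg2:21,lin:avg2:22,lin:avg2:23} before finalizing the constant.) Combining the three cases with a final union bound gives that \cref{alg:avg} succeeds with high probability under truthful reports.
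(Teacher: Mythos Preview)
Your overall decomposition into the three failure modes and the union-bound treatment of the stitching subprocedures is correct and matches the paper's proof essentially line for line, including the bound $n\,(1-p)^{(s')^2}\le \exp(\log n - p\,s^2/\log(n)^4)\to 0$. One point the paper makes explicit and you should too: the candidate edges used for stitching are cross edges between nodes of $\bigcup_i\Paths_i$ and hence disjoint from the edges the graph search has already revealed, so independence is preserved despite conditioning on the search's trajectory.

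The real gap is in case~(iii). Your pigeonhole claim that ``at least one active hospital $i$ satisfies $|\Explored\cap V_i|\ge s'/2$'' is false when there are many active hospitals (e.g., $\sqrt{s'}$ hospitals each contributing $\sqrt{s'}$ explored nodes), and the parenthetical does not repair it. The paper avoids this by fixing, before the search, two paths $\pi_1\in\Paths_{i_1}$ and $\pi_2\in\Paths_{i_2}$ from \emph{different} active hospitals, and then arguing \emph{per explored node}: each $\nxt$ added to $\Explored$ belongs to at most one of $i_1,i_2$, so it has $s'$ fresh candidate random cross edges into the first $s'$ nodes of whichever of $\pi_1,\pi_2$ lies in a different hospital. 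Summing over the $\ge s'$ explored nodes yields $(s')^2$ distinct, still-unqueried random edges, giving failure probability at most $(1-p)^{(s')^2}$ with no factor-of-two loss and no dependence on the number of hospitals. Your bookkeeping worry about prefix windows is fine (only suffixes are ever removed), but you should replace the pigeonhole step with this per-node, two-target argument.
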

\begin{proof}
Suppose that that the mechanism is run on the full graph.
If the mechanism fails, it might enter one of the subprocedures  \hyperref[alg:stitch1]{\texttt{stitch1()}} or \hyperref[alg:stitch2]{\texttt{stitch2()}} and fail there, or it might fail in \cref{lin:avg2:30}.
Once the mechanism enters either of the subprocedures, there are at most $\left| \bigcup_i \Paths_i \right|$ stitches between paths to be made; in each stitch, any one of $s'^2$ edges existing suffices.
Note that these edges are all cross edges and that they are disjoint from the edges previously explored by the graph search.
Thus, independently from previous steps in the mechanism, the probability of a random edge existing for a given stitch is $1 - (1 - p)^{s'^2}$,
and the existence of deterministic edges only makes this probability higher.
By a union bound over at most $n$ stitches, the probability of failure is at most $n \, (1 - p)^{s'^2} \leq e^{\log n - p \, s^2/\log (n)^4} \to 0$.

It remains to show that the mechanism is unlikely to fail in \cref{lin:avg2:30}.
Once we enter the graph search, fix paths $\pi_1 \in \Paths_{i_1}$ and $\pi_2 \in \Paths_{i_2}$ belonging to different hospitals $i_1, i_2$.
Such paths exist because at least two hospitals are active.
Whenever a node $\nxt$ is added to $\Explored$, the mechanism has, at this point in time, not queried its outgoing edges.
If the node does not belong to $i_1$, there are $s'$ possible random edges between $\nxt$ and the first $s'$ nodes of $\pi_1$.
Else, there are $s'$ possible random edges between $\nxt$ and the first $s'$ nodes of $\pi_2$.
If, at any point during the graph search, one of these edges is materialized, the mechanism can no longer fail in \cref{lin:avg2:30}.
Because, whenever the mechanism reaches \cref{lin:avg2:27}, it has explored at least $s'$ nodes, the probability of this failure mode is at most $(1 - p)^{s'^2} \leq e^{-p \, s^2 / \log (n)^4} \to 0$.
Therefore, the probability of failure anywhere in the mechanism also tends to zero.
\end{proof}

\begin{lemma}
    \label{thm:avgwelfare}
    Suppose that $s \in \omega(\log (n)^2)$.
    Whenever \cref{alg:avg} succeeds under truthful reports, it achieves welfare of at least $\big(1 - o(1)\big) \, |\avgopt^s|$.
\end{lemma}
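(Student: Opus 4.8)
The plan is to fix the benchmark path $\avgopt^s$ and compare it to the total length $\totlen{\upaths}$ of the predetermined path collection that the mechanism eventually stitches, treating everything deterministically once we condition on the mechanism succeeding and returning some path~$\pi$. First I would dispose of the degenerate returns. If the altruist owner $a$ is inactive, then every $a$-internal path from $\alpha$ has length below $s$, so no path from $\alpha$ can make $a$'s segments average at least $s$; hence $\avgopt^s$ is the empty path and the claim is vacuous. If $a$ is the only active hospital, note that any hospital owning a segment of $\avgopt^s$ has an internal path at least as long as its longest segment, which is at least its segment average, i.e.\ at least $s$; such a hospital is therefore active. Consequently $\avgopt^s$ uses only active hospitals, so in this case it is $a$-internal, whence $|\irpath| \ge |\avgopt^s|$ and the output $\irpath$ already suffices. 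So I may assume $a$ is active and at least two hospitals are active, i.e.\ execution reaches the graph search.

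The core is the combinatorial claim that, after \cref{lin:avg2:addshort}, $\totlen{\upaths} \ge (1-O(1/\log n))\,|\avgopt^s|$. For each active hospital~$i$ let $r_i$ be the number of segments of $\avgopt^s$ it owns; by the high-averages constraint these $r_i$ segments are disjoint $i$-internal paths of average length $\ge s$, so they witness that the count maximized in \cref{lin:avg2:3} is at least $r_i$, giving $\paths_i \ge \log n \cdot r_i$ before \cref{lin:avg2:8}. I then have to check that the cap $\paths_j \leftarrow \min(\paths_j,\sum_{i \ne j}\paths_i)$ for the special hospital $j$ still leaves $\paths_j \ge r_j$: this uses that the segments of $\avgopt^s$ alternate between hospitals with the first owned by $a$, so any single hospital owns at most $\lceil R/2 \rceil$ of the $R := \sum_i r_i$ segments, hence the other hospitals own at least $R - \lceil R/2 \rceil \ge r_j - 1$ segments; multiplying by the $\log n$ slack, and using $\sum_{i\ne j}\paths_i \ge \log n$ from the existence of a second active hospital, gives $\sum_{i \ne j}\paths_i \ge r_j$ in all cases. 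With $\paths_i \ge r_i$ in hand, selecting exactly the $r_i$ segments of $\avgopt^s$ is feasible in \cref{lin:avg2:10}, so $\Paths_i$ there has total length $\ge |\avgopt^s \cap V_i|$. The removal of short paths in \cref{lin:avg2:remshort} discards at most $4\,s'\paths_i$ nodes, which by the preamble bound $\totlen{\Paths_i} \ge (s/\log n)\,\paths_i$ is at most a $4/\log n$ fraction, and the re-splitting in \cref{lin:avg2:addshort} preserves total length; summing over active $i$ (hospitals with $r_i = 0$, and inactive hospitals with no nodes on $\avgopt^s$, contribute nothing) yields the claim.

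It then remains to show that a successful run retains all but a $o(1)$ fraction of $\totlen{\upaths}$. The graph search removes at most $|\Explored| \le 2\,s'$ nodes from the $\Paths_i$; the subprocedure \texttt{stitch1()} may additionally drop a prefix of length $< 2\,s'$ of a single path, the $\Explored$-segment only contributes nonnegatively, and each of the at most $|\upaths| + O(1) \le \sum_i \paths_i + O(1)$ stitches costs fewer than $2\,s'$ nodes. Using $\totlen{\upaths} \ge \sum_i \paths_i\,(s/\log n)(1-4/\log n)$ again, the total stitching loss is an $O(1/\log n)$ fraction of $\totlen{\upaths}$, while the $O(s')$ absolute losses are $o(|\avgopt^s|)$ because, whenever $\avgopt^s$ is nonempty, its first segment lies in $V_a$, so $r_a \ge 1$ and $|\avgopt^s| \ge |\avgopt^s \cap V_a| \ge s = \omega(s')$. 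I would also handle the two success branches returning $\irpath$ inside the search: returning in \cref{lin:avg2:26} means fewer than $s'$ nodes are reachable from $\alpha$ among active hospitals, so $|\avgopt^s| < s' < s$, which forces $r_a = 0$ and hence $\avgopt^s = \emptyset$; and the \texttt{stitch2()} branch is bounded exactly like \texttt{stitch1()}. Combining, $|\pi| \ge (1-o(1))\,\totlen{\upaths} - o(|\avgopt^s|) \ge (1-o(1))\,|\avgopt^s|$.

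The main obstacle is the combinatorial claim, and within it the verification that $\paths_j \ge r_j$ survives the cap in \cref{lin:avg2:8}; this is where both the alternating structure of $\avgopt^s$'s segments and the deliberate $\log n$ over-provisioning of $\paths_i$ are essential, mirroring the corresponding step of \cref{thm:s1welfare} but complicated here by the variable path lengths and the remove-short-then-resplit manipulation. A secondary bookkeeping nuisance is confirming that the many $O(s')$-sized losses from the graph search and from landing in the interior of a predetermined path stay lower-order; this rests on $|\Explored| \le 2\,s'$ throughout and on the stitch count being controlled by $\sum_i \paths_i$ while the paths have average length $\Theta(s/\log n) \gg s'$.
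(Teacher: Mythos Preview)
Your proposal is correct and follows essentially the same route as the paper's proof: dispose of the degenerate $\irpath$ returns by observing that only active hospitals can own segments of $\avgopt^s$; establish $\paths_i \ge r_i$ for every hospital (including the capped hospital $j$, using segment alternation and the $\log n$ over-provisioning); conclude that the segment collection $\Benchmark_i$ is a feasible choice in \cref{lin:avg2:10}, so $\totlen{\Paths_i} \ge |\avgopt^s \cap V_i|$; and finally bound the short-path removal, graph-search, prefix-drop, and stitching losses by an $O(1/\log n)$ fraction using the preamble inequality $\totlen{\Paths_i} \ge (s/\log n)\,\paths_i$. The only cosmetic difference is that the paper carries out the loss accounting per hospital (bounding the total loss by $9\,s'\,\paths_i$ against $\totlen{\Paths_i}$), whereas you aggregate the stitching loss over $\totlen{\upaths}$ and handle the residual $O(s')$ terms via $|\avgopt^s| \ge s$ when nonempty; both arrive at the same $(1-O(1/\log n))$ bound.
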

\begin{proof}
Assuming that the mechanism does not fail, we need to show that the resulting path $\pi$ has utility at least $\big(1 - o(1)\big) \, |\avgopt^s|$.
For each hospital $i$, denote the set of $i$'s segments of $\avgopt^s$ by $\Benchmark_i$, and $|\Benchmark_i|$ by $\benchmark_i$.

From the definition in \cref{lin:avg2:3}, the original value of $\paths_i$ is clearly at least $\log n \, \benchmark_i$.
In particular, inactive hospitals cannot have any nodes in $\avgopt^s$.

If the mechanism returns in \cref{lin:avg2:6} because $a$ is inactive, there is no path satisfying the conditions on $\avgopt^s$, and there is nothing to show.
Else, if the mechanism returns in the same line because no other hospital is active, $\avgopt^s$ cannot use any other hospital's subgraph.
Therefore, selecting $\irpath$ obtains at least the welfare of the benchmark.
Finally, if the mechanism returns in \cref{lin:avg2:26}, there are fewer than $s'$ nodes reachable from the altruist within the active hospitals, who are a superset of the hospitals who might possibly have segments in a path $\avgopt^s$.
This shows that the benchmark again is trivial.
Since the returns in \cref{lin:avg2:6,lin:avg2:26} are dealt with, and since we assumed that the mechanism succeeds, it remains to consider cases in which the mechanism returns a successful stitching.

Note that, since the owners of the segments of $\avgopt^s$ alternate, $\benchmark_i \leq \sum_{i' \neq i} \benchmark_{i'} + 1$ for all $i$.
Because $\paths_i \geq \log n$ for at least two hospitals, and since $\log n \geq 2$ for large $n$, we can continue this chain as $\sum_{i' \neq i} \benchmark_{i'} + 1 \leq \sum_{i' \neq i} (\benchmark_{i'} + \paths_{i'}/2) \leq \sum_{i' \neq i} (\paths_{i'}/\log n + \paths_{i'}/2) \leq \sum_{i' \neq i} \paths_{i'}$.
Therefore, even after the modification in \cref{lin:avg2:8}, it holds for all $i$ that $\paths_i \geq \benchmark_i$.

Thus, $\Benchmark_i$ satisfies the conditions on $\Paths_i$ in \cref{lin:avg2:10}, and we know that the total length of $\Paths_i$ is at least that of $\Benchmark_i$.
It then suffices to show that, for each active hospital $i$, the utility of the path returned by the mechanism will be at least a $\big(1 - o(1)\big)$ fraction of the total length of $\Paths_i$ as defined in \cref{lin:avg2:10}.
Summing up over all active hospitals (and remembering that $\avgopt^s$ does not intersect the subgraphs of inactive hospitals), this will show that welfare is at least
\[
    \sum_{\mathclap{\text{active $i$}}} \left(1 - o(1)\right) \, \totlen{\Paths_i} \leq \sum_{\mathclap{\text{active $i$}}} \left(1 - o(1)\right) \, \totlen{\Benchmark_i} = \left(1 - o(1)\right) \, |\avgopt^s|.
\]

We now show that a fixed active hospital $i$ obtains $(1 - o(1)) \, \totlen{\Paths_i}$ utility from the mechanism.
As observed above, the removal of small paths in \cref{lin:avg2:remshort} reduces the total length of $\Paths_i$ by at most $4 \, s' \, \paths_i$.
Each stitching\,---\,whether in \hyperref[alg:stitch1]{\texttt{stitch1()}} or \hyperref[alg:stitch2]{\texttt{stitch2()}}\,---\,may lose up to $2 \, s'$ nodes for each path in $\Paths_i$ (plus possibly $s'$ more in \hyperref[alg:stitch1]{\texttt{stitch1()}} for $\sigma'$), which is another total loss of up to $s' \, (2 \, \paths_i + 1)$.
Furthermore, the graph search might cut off up to $2 \, s'$ nodes from paths in total, possibly all of them from $\Paths_i$.
Finally, the loss of up to $2 \, s'$ from possibly ignoring the prefix in \hyperref[alg:stitch1]{\texttt{stitch1()}} is already accounted for in the stitching loss.
Thus, at most $s' \, (6 \, \paths_i + 3) \leq 9 \, s' \, \paths_i$ nodes in $\Paths_i$ are not included in the final stitched path.

Since, as observed in the general remarks, the total length of $\Paths_i$ is at least $(s/\log n) \, \paths_i$, this loss expressed as a fraction of $\totlen{\Paths_i}$ is at most $(9 \, s' \, \paths_i)/\big((s/\log n) \, \paths_i\big) = 9/\log n \to 0$.
\end{proof}

\begin{lemma}
    \label{thm:avgic}
    Again, suppose that $s \in \omega(\log (n)^2)$.
    Assuming that \cref{alg:avg} succeeds under truthful reports and returns a path $\pi$, each hospital can only obtain $\big(1 + o(1)\big) \, |\pi \cap V_i|$ utility by hiding vertices, by diverting the path, or both.
\end{lemma}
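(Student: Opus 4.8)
The plan is to bound, for each hospital~$i$, the utility it can extract from any combination of hiding nodes and a matching-time diversion, by comparing it to the length of hospital~$i$'s segments under truthful reporting. The key quantity is $\paths_i$, the number of disjoint internal paths the mechanism commits to for hospital~$i$; since each nonzero $\paths_i$ is at least $\log n$ and the total length of $\Paths_i$ is at least $(s/\log n)\,\paths_i$, while each path contributes length at most (roughly) $s/\log n + \text{lower order}$ on the nose, the utility from truthful play is $\big(1-o(1)\big)$ times the total length of $\Paths_i$, and I want to show no deviation beats this by more than a $\big(1+o(1)\big)$ factor. As in the proof of Lemma~\ref{thm:s1ic}, I would first dispose of the easy cases: if a manipulation makes the mechanism return $\irpath$ or a subpath thereof (\cref{lin:avg2:6,lin:avg2:26,lin:avg2:30}), then either the altruist owner's guaranteed utility $|\irpath|$ is an obvious upper bound (and it is exactly what individual rationality promises), or the returned path is contained in $\irpath$ and no non-altruist hospital can gain. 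So the interesting case is when the mechanism succeeds with a stitching under both truthful play and the deviation.

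The central structural claim I would establish is that hiding nodes cannot \emph{increase} $\paths_i$ for the deviating hospital~$i$, nor the total length of the set $\Paths_i$ selected in \cref{lin:avg2:10}. The definition in \cref{lin:avg2:3} is monotone: hiding nodes can only shrink the maximum number of disjoint internal paths of average length $\ge s$, hence only shrink $\paths_i$; and with fewer nodes and at most as many paths, the max-total-length selection in \cref{lin:avg2:10} can only get shorter. (One must handle the re-indexing in \cref{lin:avg2:7,lin:avg2:8}: if $i$ was the special hospital $j$, hiding nodes can only lower $\paths_j$; if the manipulation demotes $i$ from being $j$, then after demotion $\paths_i \le \max_{i'\neq i}$, which is bounded by what the other hospitals can form under truthful play, so again $i$'s committed length drops — mirroring the two sub-cases in the proof of Lemma~\ref{thm:s1ic}.) Since only nodes in $\bigcup_i \Paths_i$, plus the at most $s'$ nodes of the graph-search prefix $\Explored$, can appear on the returned path, hospital~$i$'s utility from a successful stitching is at most $\totlen{\Paths_i} + s'$, evaluated on the \emph{manipulated} report, which is at most $\totlen{\Paths_i^{\text{truthful}}} + s'$. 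Comparing with the truthful lower bound $\big(1-9/\log n\big)\,\totlen{\Paths_i^{\text{truthful}}} \ge \big(1-9/\log n\big)(s/\log n)\,\paths_i$ and noting $s' = s/\log^2 n$ while $\paths_i \ge \log n$ (so $s' \le \totlen{\Paths_i}/\log n$ up to constants), the ratio is $\big(1+o(1)\big)$.

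It remains to absorb the matching-time diversion into this bound, and this is the step I expect to be the main obstacle. A diversion lets hospital~$i$ leave the returned path $\pi$ at one of its nodes $v$ and append an arbitrary internal continuation $\pi_{\text{ext}}$ disjoint from the prefix, possibly using nodes $i$ never reported. The danger highlighted in \cref{sec:avg} is that the graph search may have split some committed path $\pi \in \Paths_i$ into a prefix and a stitched-in suffix, so that after the mechanism's path visits $\paths_i$ paths of~$i$, a diversion at the end could let~$i$ traverse a $(\paths_i+1)$-th internal path; the reason this is harmless is precisely that $\paths_i \ge \log n$ was chosen large, so one extra path of length $O(s)$ is a $1/\log n$ fraction of $i$'s committed total length $\ge (s/\log n)\,\paths_i$. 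I would argue: the nodes of $\pi_{\text{ext}}$ that are \emph{already on} the prefix $\pi_1$ contribute nothing new; the nodes of $\pi_{\text{ext}}$ that lie on some path in $\Paths_i$ but were cut off (the parts of $\Explored$ or of removed suffixes) are at most $2s' + \paths_i\cdot 2s'$-ish many — an accounting essentially identical to the stitching-loss bookkeeping in the proof of Lemma~\ref{thm:avgwelfare}; and any \emph{genuinely new} internal nodes form a single extra internal path, of length at most the longest internal path of~$i$, which is $O(s)$ — again $O(s) = o\big((s/\log n)\,\paths_i\big)$. Summing these, the diversion adds only $O(s' \paths_i) + O(s) = o(\totlen{\Paths_i})$, so the total utility from hiding-plus-diverting is still $\big(1+o(1)\big)\,|\pi \cap V_i|$. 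The careful part is making the diversion-length bound uniform over \emph{all} nodes $v$ on $\pi$ at which $i$ could divert, and over all choices of $\pi_{\text{ext}}$, without double-counting nodes that are simultaneously stitching losses and reachable by the diversion; I would handle this by charging every node of $\pi' \setminus \pi_1$ to exactly one of the three categories above.
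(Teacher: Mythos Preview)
Your handling of the hiding-only case and the $\irpath$ branches mirrors the paper's, and your monotonicity argument for $\paths_i$ and $\totlen{\Paths_i}$ is correct. The gap is in the diversion step. You assert that the ``genuinely new'' internal continuation has length at most the longest internal path of~$i$, ``which is $O(s)$'', and then that $O(s) = o\big((s/\log n)\,\paths_i\big)$. Both claims are false. The parameter $s$ is a \emph{lower} threshold on average segment length, not an upper bound on internal paths; a hospital's longest internal path can have length $\Theta(n)$, so nothing prevents a hidden path of length far exceeding $s$ from being appended via $\pi_{\text{ext}}$. And even if the continuation were $O(s)$, note that $(s/\log n)\,\paths_i$ can be as small as $s$ (take $\paths_i = \log n$, the minimum for an active hospital), so $O(s)$ is not $o$ of it.

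The paper does not attempt to bound the absolute length of the extra path. Instead it proves a structural claim: the $i$-nodes on the diverted path are covered by at most $\paths_i + 1$ disjoint $i$-internal paths (plus $\le 2\,s'$ stray nodes from the search prefix). Since the truthful $\paths_i$ is at least the manipulated one, the $\paths_i$ longest among these $\paths_i+1$ paths constitute a feasible choice in \cref{lin:avg2:10} under truthful reporting, whence the truthful total length $\ell$ satisfies $\ell \ge (d - 2\,s')\,\paths_i/(\paths_i+1)$. This ``revealed optimality'' step is what turns the $\paths_i \ge \log n$ padding into a $(1 + 1/\log n)$ multiplicative bound on $d$, regardless of how long any individual path is. Your three-category charging scheme would need to be replaced by this argument; the point is not that the extra path is short, but that one extra path among $\paths_i+1$ can account for at most a $1/(\paths_i+1)$ fraction of the total.
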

\begin{proof}
Suppose again that the mechanism succeeds on the truthful graph.
If it returns in \cref{lin:avg2:6}, this is due to certain hospitals not having internal paths longer than $s$, and hiding subsets of nodes will not make the mechanism return elsewhere.
Similarly, if the mechanism returns in \cref{lin:avg2:26}, fewer than $s'$ are reachable from the altruist within the active hospitals.
If a hospital hides nodes, this can only reduce the set of active hospitals and the set nodes reachable within the active hospitals.
It is possible that hiding nodes would make the mechanism return in \cref{lin:avg2:6} instead, but then the mechanism would still return $\irpath$.
The mechanism cannot be manipulated into returning earlier in \cref{lin:avg2:21} because this would require $s'$ nodes to be reachable from the altruist within an active hospital's subgraph.
Finally, the mechanism cannot be manipulated in executing lines past \cref{lin:avg2:26} since, whenever the do-while loop ends, our assumption implies that the condition in \cref{lin:avg2:25} is true.
While hiding nodes or diverting paths can lead to a different path than the true $\irpath$ being chosen, such a manipulation cannot increase any hospital's utility.
Indeed, each such outcome is an internal path of the altruist owner starting at the altruist.
No hospital other than the altruist owner can obtain non-zero utility, and the altruist owner's utility is maximized under truthful reports.

We may thus assume that the mechanism returns a successful stitching under truthful reports.
In \cref{lin:avg2:3}, note that the definition of $\paths_i$ is monotone in the reported hospital subgraph, i.e., hiding nodes can only decrease the value.
If a hospital is not selected in \cref{lin:avg2:7} under truthful reports, no manipulation by that hospital can change that.
The hospital $j$ selected in \cref{lin:avg2:7} may hide its nodes in such a way that it is no longer selected, but, for this, the manipulated $\paths_j$ can be at most $\max_{i \neq j} \paths_i$, which is a lower bound on the truthful $\paths_j$ after modification in \cref{lin:avg2:8}.
If $j$ remains the selected hospital, it cannot increase its $\paths_j$ either, because $j$ has no control over the value $\sum_{i \neq j} \paths_i$.
Thus, even after \cref{lin:avg2:8}, $\paths_i$ is maximized by truthful reporting for each hospital $i$.
In \cref{lin:avg2:10}, $\Paths_i$ is defined such that its total length is monotone in $\paths_i$.
Thus, revealing all nodes maximizes this total length.
In the argument for efficiency, we showed that each active hospital~$i$ receives at least a $\big(1 - o(1)\big)$ fraction of this total length as their utility.
It suffices to show that no manipulation, including diverting the path, can obtain more utility than a $\big(1 + o(1)\big)$ fraction of the total length of $\Paths_i$ at the state of its definition in \cref{lin:avg2:10} under truthful reports.

To begin, if a manipulation makes the mechanism return $\irpath$ in \cref{lin:avg2:6}, in \cref{lin:avg2:26}, or in one of the failure states, no hospital $i$ can receive more utility than the total length of $\Paths_i$.
Indeed, all hospitals except for the altruist owner obtain zero utility from this outcome (and cannot divert the path).
Since the altruist owner $a$ is active, the singleton set with $\irpath$ (or with $\irpath$ diverted at some point) would have been a valid choice for $\Paths_a$, which means that the total length of $\Paths_a$ must be at least the utility obtained in this scenario.

Thus, we can restrict our focus on manipulations that prompt the mechanism to return a successful stitching.
We claim that hospital $i$'s nodes that end up on the path (possibly diverted) are always contained in $\paths_i + 1$ many disjoint $i$-internal paths, except for up to $2 \, s'$ additional nodes.
If this is indeed true, under truthful reporting, $\Paths_i$ could have been set to the $\paths_i$ longest of these paths because the truthful $\paths_i$ is at least as large as that under some manipulation.
Thus, if the total length of the truthful $\Paths_i$ is $\ell$ and the utility of the path resulting from the deviation is $d$, it must hold that $\ell \geq (d - 2 \, s') \, \paths_i/(\paths_i + 1)$, i.e., that $d \leq (1 + 1/\paths_i) \, \ell + 2 \, s'$.
Since $\ell \geq (s/\log n) \, \paths_i$, since $\paths_i \geq \log n$, and thus $\ell \geq s$, the utility of the deviation is at most $\ell \, (1 + 1 / \log n + 2 / \log (n)^2) \in \big(1 + o(1)\big) \, \ell$, as claimed. 

We will now show this remaining claim, i.e., that all but up to $2 \, s'$ of hospital~$i$'s nodes on a successful stitching are contained in up to $\paths_i + 1$ disjoint paths.
Without diverting paths, the path produced by the mechanism is always a subset of $\bigcup_{i} \Paths_i$ plus up to $2 \, s'$ nodes to reach these paths, which satisfies the claim with even only $\paths_i$ disjoint paths.

We can thus concentrate on manipulations that divert the path.
In \hyperref[alg:stitch2]{\texttt{stitch2()}}, and in \hyperref[alg:stitch1]{\texttt{stitch1()}} if the path prefix is small (\cref{lin:stitch1:2}), hospital~$i$ will receive up to $2 \, s'$ nodes on the path out of the altruist found by the graph search, and after this exactly $\paths_i$ stitched segments.
Diverting the path at any point modifies one of these segments and removes the later segments.
Thus, taking the segments as up to $\paths_i$ disjoint internal paths, the claim is satisfied.

It remains to consider the case where the prefix is long (\cref{lin:stitch1:4}) in \hyperref[alg:stitch1]{\texttt{stitch1()}}.
The path returned by the mechanism will have up to $2 \, s'$ nodes in the beginning and then $\paths_i + 1$ segments.
While the segments $\pi$ and $\sigma$ (which is the remainder of $\sigma'$ after the nodes in the beginning) lie on a single path, the hospital could divert the path to change this.
In particular, it could divert $\pi$ such that it no longer lies on a single path with $\sigma$.
But even then, there are at most $\paths_i + 1$ segments, which shows the claim.
\end{proof}
\end{document}